\newcommand{\ignore}[1]{}
\newtheorem{theorem}{Theorem}
\theoremstyle{remark}
\newtheorem*{remark}{Remark}
\theoremstyle{definition}
\newtheorem{definition}{Definition}
\theoremstyle{definition}
\newcommand{\yvec}{\mathbf{Y}}
\newcommand{\calphahalf}{c_{\alpha/2}}
\newcommand{\cbar}{\bar{c}}
\newcommand{\ctilde}{\tilde{c}}
\newcommand{\bY}{\mathbf{Y}}
\newcommand{\by}{\mathbf{y}}
\newcommand{\calC}{\mathcal{C}}
\newcommand{\calS}{\mathcal{S}}
\newcommand{\var}{\mathrm{Var}}
\newcommand{\E}{\mathrm{E}}
\newcommand{\fcr}{\text{FCR}}
\newcommand{\wdfdr}{\text{wdFDR}}
\begin{document}


\title{Selective Sign-Determining Multiple Confidence Intervals with FCR Control}
\author{Asaf Weinstein\thanks{Department of Statistics, Sequoia Hall, 390 Serra Mall, Stanford University, Stanford, CA 94305 (E- mail: asafw@stanford.edu). To whom correspondence should be addressed. 
Partially supported by the Simons foundation under the Math+X program.} \quad Daniel Yekutieli \thanks{Department of Statistics and Operations Research, School of Mathematical Sciences, Tel Aviv University, Ramat Aviv, Tel Aviv 69978, Israel (E- mail: yekutiel@post.tau.ac.il)}}
\date{}
\maketitle

%
%
%
%




\begin{abstract}
Given $m$ unknown parameters with corresponding independent estimators, the Benjamini-Hochberg (BH) procedure can be used to classify the sign of parameters such that the expected proportion of erroneous directional decisions (directional FDR) is controlled at a preset level $q$. 
More ambitiously, our goal is to construct sign-determining confidence intervals---instead of only classifying the sign---such that the expected proportion of non-covering constructed intervals (FCR) is controlled. 
We suggest a valid procedure which adjusts
 a marginal confidence interval in order to construct a maximum number of sign-determining confidence intervals. 
%
%
We propose a new marginal confidence interval, designed specifically for our procedure, which allows to balance a trade-off between power and length of the constructed intervals, and, in fact, often enjoy (almost) the best of both worlds. 

We apply our methods to detect the sign of correlations in a highly publicized social neuroscience study and, in a second example, to detect the direction of association for SNPs with Type-2 Diabetes in GWAS data. 
In both examples we compare our procedure to existing methods and obtain encouraging results. 

%
\end{abstract}
\bigskip
{\bf Keywords:} Confidence intervals, Directional decisions, False Coverage Rate, False Discovery Rate, Selective inference, Multiplicity

\section{Introduction} \label{sec:intro}

Let $f$ be a known density symmetric about zero and suppose that an analyst collects independent observations $Y_i \sim f(y_i - \theta_i) \ i=1,...,m$ corresponding to unknown location parameters $\theta_i \in {\mathbb R}$. 
In many applications the analyst will highlight a subset of parameters which the data suggests as interesting, and then focus inference only on these highlighted parameters. 
A prototypical case is constructing confidence intervals for parameters $\theta_i$ corresponding to only rejected null hypotheses $H_{0i}: \theta_i=\theta_{0i},\ i=1,...,m$. 
For example, in RNA microarray one is interested in genes that are differentially expressed. 
More generally, we may consider a two-stage procedure, in which the analyst first attempts to answer a question of primary interest regarding each of the $\theta_i$; 
at the second stage, {\it only if} he was able to answer the primary question regarding $\theta_i$ with enough certainty, the analyst will pose a secondary---{ follow-up}---question regarding $\theta_i$. 
The first question is often intended to detect ``signals" of one or more types; the subsequent question may depend on the answer to the first, and is usually intended to learn about further qualities of $\theta_i$. 


In this article the primary question concerns the sign of the parameters. 
Specifically, the analyst is interested first in classifying the sign of parameters $\theta_i$ as positive ($>0$) or non-positive ($\leq 0$); 
we refer to this as {\it weak sign classification}. 
A third decision---declaring ``inconclusive data"---is allowed when an observation size is too small to infer the sign. 
The sign classification problem is important in many applications. 
For example, when comparing several drugs to a control it may be of interest to determine which are better (difference positive) and which are not (difference non-positive). 
\citet{bohrer1979multiple} and \citet{bohrer1980optimal} called the sign classification problem a multiple {\it three-decision} problem \citep[in reference to][]{neyman1977synergistic} and studied optimality of decision rules under familywise error rate (FWER) control, namely, the probability of making at least one incorrect directional decision. 
In the current paper we consider rules that control the {\it weak directional} FDR, which we define as
\begin{equation}
\wdfdr := \E\left[ \frac{V_{D}}{R_{D}\vee 1} \right]
\end{equation}
where $R_D$ is the number of parameters whose sign was classified, and $V_D$ is the number of parameters whose sign was {incorrectly} classified: non-positive parameters declared positive or positive parameter declared non-positive. 

One procedure that is known to control wdFDR is the directional-BH procedure. 
Here parameters are selected via the usual BH procedure using two-sided p-values for testing $H_{0i}:\theta_i=0$, and a directional decision is made for each selected parameter according to the sign of the estimator. 
The directional-BH procedure in fact makes a {\it strict} sign classification regarding each selected parameter, i.e., $\theta_i$ is declared negative (if $Y_i<0$) or positive (if $Y_i>0$), and still controls the expected proportion of incorrect such decisions \citep{benjamini2005false}. 
The latter is known as the {\it mixed directional} FDR \citep{benjamini1993false}, and is a stronger version of directional FDR; it may differ from the wdFDR when zero parameters are possible. 
The procedures we consider are required to make only weak directional decisions and control the wdFDR. 
To fix terminology, when we refer to a sign-classification or a directional decision, unless otherwise noted, it should be understood in the weak sense, i.e., positive or non-positive. 

While a directional decision may be of primary importance, in practice it is almost always desirable to supplement such a decision with a confidence interval. 
For example, if the data suggests that a particular drug among a candidate set improves over control, we would like to be able to say how big the difference is at least; if the difference is immaterial, prescription of that drug may not be recommended after all. 

Thus, a more general objective is to construct, upon observing $Y_i, i\leq m$, confidence intervals for a {\it subset} of the $m$ parameters, such that (i) each constructed confidence interval includes either only positive or only nonpositive values, and (ii) we have control at level $q$ over the false coverage rate \citep{benjamini2005false}
\begin{equation}
\fcr = \E\left[ \frac{V_{CI}}{R_{CI}\vee 1} \right],
\end{equation}
where $R_{CI}$ is the number of confidence intervals constructed and $V_{CI}$ is the number of non-covering confidence intervals constructed. 
A confidence interval is said to be {\it sign determining} if it includes only positive or only non-positive values. 
Correspondingly, we call a procedure with property (i) above a Selective Sign-Determining Confidence Intervals procedure, and abbreviate it {\it Selective-SDCI} hereafter. 
We say that a Selective-SDCI procedure with property (ii) above is {\it valid at level $q$}. 

The directional-BH procedure corresponds to a valid Selective-SDCI procedure, which trivially constructs the interval $(0,\infty)$ for any selected parameter declared positive, and the interval $(-\infty,0)$ (note the exclusion of zero) for any parameter declared negative. 
In the current article we first propose a more general valid procedure, that can construct {\it nontrivial} sign-determining selective confidence intervals. 
In the case of a single parameter, $Y\sim f(y-\theta)$, our procedure is very simple: 
suppose that $\calC(y; \alpha)$ is any marginal $1-\alpha$ confidence interval, i.e., $\Pr_{\theta}\left(\theta\notin \calC(Y; \alpha)\right)\leq \alpha$. 
Then construct $\calC(Y; \alpha)$ if and only if it is a subset of $(0,\infty)$ or of $(-\infty,0]$. 
In that case,
\begin{equation} \label{eq:fcr-single}
\begin{aligned}
\fcr &= \Pr \big(\{\theta \notin \calC(Y; \alpha)\} \cap \{\calC(Y; \alpha) \text{ is constructed}\} \big) \\
&\leq \Pr \big(\theta \notin \calC(Y; \alpha) \big) \leq \alpha.
\end{aligned}
\end{equation}
\citet{benjamini2005false} pointed out \eqref{eq:fcr-single} to demonstrate that selection is handled gracefully as long as multiplicity is not involved. 
A main thrust of the current work is to extend the above to the case of general $m$. 
Of course, reporting all marginal $1-q$ confidence intervals which are a subset of either $(0,\infty)$ or $(-\infty,0]$, will not lead to a valid $q$-level procedure. 
Instead, for any marginal confidence interval procedure, our method employs the FCR adjustment of \citet{benjamini2005false} to produce the largest set of {\it sign-determining} FCR-adjusted marginal confidence intervals. 

Since the choice of the {\it marginal} confidence interval procedure determines our Selective-SDCI procedure entirely, it controls both the power of the procedure as a sign-determining rule and the length (and shape) of constructed intervals. 
In line with recent work of \citet{fithian2014optimal} and \citet{tian2015selective}, we will see that in our procedure there is a trade-off between these objectives: higher power generally bears a cost of lower ``accuracy" of the constructed intervals (as measured by their length and shape). 
This leads us to derive a {\it new} {marginal} confidence interval which enables our procedure to control the trade-off. 
The corresponding procedure determines the sign of parameters according to a level-$(\psi\cdot 2q)$ directional-BH procedure for $1/2\leq \psi\leq 1$, and constructs sign-determining intervals which, loosely speaking, are longer for larger $\psi$. 

Along with the original sign problem, we offer two extensions of our procedure: the first is motivated by an example from genetics, and generalizes our procedure to the case of a two-dimensional parameter, where the primary objective is to classify the sign of the first component. 
The second extension goes beyond the sign problem: we consider detecting parameters $\theta_i>\delta$ or $\theta_i<-\delta$ where $\delta$ is some pre-specified quantity; 
to address this problem we offer a procedure which constructs selective confidence intervals such that each interval contains either only values larger than $\delta$ or only values smaller than $-\delta$. 


The paper is organized as follows. 
Section \ref{sec:review} reviews the work of \citet{benjamini2005false}, which will serve us in Section \ref{sec:ssdci} to derive a class of valid Selective-SDCI procedures. 
Section \ref{sec:mqc} presents a new marginal confidence interval, designed specifically to be used in a Selective-SDCI procedure. 
In Section \ref{sec:conf-regions} we generalize our method to construct sign-determining confidence regions for parameters with more than one dimension. 
Results from a simulation study are reported in Section \ref{sec:simulation}. 
In Section \ref{sec:example} we use our method to detect the sign of correlations in a neuroscience study. 
Proofs are generally deferred to the Appendix. 

\bigskip

{\bf Notation. }
$\calC(y;\alpha)$ is a confidence interval that covers the true value with probability at least $1-\alpha$, and should be understood as a function of both $y$ and $\alpha$ unless the context suggests otherwise. 
To emphasize the dependency on $\alpha$, we sometimes write $\{\calC(\cdot; \alpha): 0\leq \alpha \leq 1\}$ instead and call it a confidence interval {\it procedure}. 
Throughout, $f$ denotes a probability density and $F$ is the corresponding distribution function. 
We denote by $c_{\alpha}$ the $1-\alpha$ quantile of a distribution, that is, the value $F^{-1}(1-\alpha)$; $z_{\alpha}$ is used for the special case of a standard normal distribution. 
Throughout, we write $Y_{(i)}$ for the observation with $i$-th {\it largest absolute value}, i.e., $|Y_{(m)}| \leq |Y_{(m-1)}| \leq ... \leq |Y_{(1)}|$. 
Finally, for any set $B$ define $-B:=\{-x:x\in B\}$. 
We tried to minimize the use of non-standard acronyms, but avoiding them altogether would result in a tedious manuscript. 
The few important ones are: CI=Confidence Interval; wdFDR=Weak Directional-FDR; SDCI=Sign-Determining Confidence Intervals; 
BY=\citet{benjamini2005false}; QC=Quasi-Conventional; MQC=Modified Quasi-Conventional. 
The reader might find this list convenient to return to if any confusion arises.

\section{Review} \label{sec:review}

\citet[][BY hereafter]{benjamini2005false} set up a framework for selective inference when multiple parameters are considered. 
Let $\yvec=(Y_1,...,Y_m)$ be a vector of estimators where $Y_j \sim f(y_j-\theta_j)$. 
Suppose that $\calS$ is a pre-specified selection rule yielding a subset $S = \calS(\yvec) \subset \{1,...,m\}$ and that a procedure, which may depend on $\calS$ and on $S$, is used to construct confidence intervals for only the selected parameters $\{\theta_j: j\in S\}$. 
Denote by $R_{CI}$ the number of confidence intervals constructed and by $V_{CI}$ the number of non-covering confidence intervals constructed. 
Then BY define the false coverage-statement rate (FCR) to be the expected value of
\[
Q_{CI} = 
\frac{V_{CI}}{ R_{CI} \vee 1 }
\]

\noindent Thus the FCR depends on $\calS$, which specifies what subset of parameters is selected in light of the data, and on the procedure which specifies how confidence intervals are constucted for any selected subset of parameters.

Suppose that at our disposal is a marginal confidence interval procedure $\{\calC(\cdot; \alpha): 0\leq \alpha \leq 1\}$ which,  for any $\alpha\in [0,1]$, specifies a ($1-\alpha$)-level marginal confidence interval for $\theta$ based on $Y\sim f(y-\theta)$. 
That is, $\Pr_{\theta}\left( \theta \in \calC(Y ;\alpha) \right) \geq 1-\alpha$ holds for any $\alpha\in [0,1]$. 
Throughout the paper we will often refer to a marginal confidence interval {\it procedure} simply as a confidence {\it interval} and write $\calC(y; \alpha)$, where it should be understood as a function of both $y$ and $\alpha$. 
Suppose that the procedure $\calC$ satisfies the following monotonicity requirement: 

\bigskip
\noindent {\bf Requirement (MON 1)} \ $\text{For any } y \text{ and any } 0\leq \alpha, \alpha' \leq 1, \text{ if }\ \ \alpha'\leq \alpha \text{ then } \calC(y;\alpha) \subseteq \calC(y;\alpha')$. 
\bigskip

\noindent Denote $CI_i(\alpha) = \calC(Y_i; \alpha)$. 
For $m>1$, if $\calS$ is an arbitrary selection rule, then constructing $CI_i(q)$ for each $i\in S $ does not, in general, guarantee $\fcr\leq q$. 
This should be obvious from considering, for example, a rule that selects the parameter corresponding to the largest of $m>1$ independent estimators (here $R_{CI}\equiv 1)$. 
On the other hand, constructing the marginal confidence interval at level $1-q/m$ trivially ensures $\fcr\leq q$. 
Indeed, denoting by $NCI_i$ the event $\{ \theta_i\notin CI_i(q/m), i\in S \}$, we have
\begin{align*}
\fcr = \E[Q_{CI}] \leq \Pr\left( \cup_{i=1}^m NCI_i \right) \leq \Pr\left( \cup_{i=1}^m \{ \theta_i\notin CI_i(q/m) \} \right) \leq q.
\end{align*}
Yet under independence of the estimators, BY show that the Bonferroni adjustment is conservative, and a smaller increase in the confidence level is sufficient to ensure $\fcr\leq q$. 
Specifically, they prove that the FCR is controlled at level $q$ under the following scheme.

\begin{definition} \label{def:BY}
Level-$q$ BY FCR-Adjusted Selective-CI Procedure
\end{definition}

\begin{enumerate}
\item Apply the selection criterion $\calS$ to obtain $\calS(\yvec)$.
\item For each selected parameter $\theta_i, \ i \in \calS(\yvec)$, let
\begin{equation}
R_{\min}(\yvec^{(i)}) = \min_y \left\{ \left| \calS(\yvec^{(i)}, Y_i=y) \right|: i\in \calS(\yvec^{(i)}, Y_i=y) \right\},
\end{equation}
where $\yvec^{(i)}$ is the vector obtained by omitting $Y_i$ from $\yvec$.
\item For each selected parameter $\theta_i, \ i\in \calS(\yvec)$, construct the following confidence interval:
\[
CI_i\left( \frac{R_{\min}(\yvec^{(i)}) \cdot q}{m} \right).
\]

\end{enumerate}
For many selection criteria, e.g., the step-up procedure of Benjamini and Hochberg, the term $\left| \calS(\yvec^{(i)}, Y_i=y) \right|$ is constant for all $y$ such that $i\in \calS(\yvec^{(i)}, Y_i=y)$ , implying that $R_{\min}(\yvec^{(i)}) = R_{CI}$. 
In that case, to adjust the confidence intervals one simply multiplies the marginal non-coverage level $q$ by the number of parameters selected and divides by $m$.

\section{Selective-SDCI procedures that determine the sign} \label{sec:selective-SDCI} \label{sec:ssdci}
In this section we propose a general scheme to produce valid Selective-SDCI procedures, which relies on a marginal confidence interval: starting with any marginal confidence interval, we show how a valid Selective-SDCI procedure can be obtained utilizing the FCR adjustment of \citet{benjamini2005false}. 
We then turn to discuss how the choice of the marginal confidence interval affects the resulting selective procedure. 

Suppose that $\{\calC(\cdot; \alpha): 0\leq \alpha \leq 1\}$ is any marginal confidence interval procedure satisfying Requirement (MON 1) of the previous section as well as 

\bigskip
\noindent {\bf Requirement (MON 2)} \ For any $0\leq \alpha \leq 1, \ \calC(-y;\alpha) = -\calC(y;\alpha)$ and the lower boundary $l(y) = \inf\left\{ \nu: \nu \in \calC(y ;\alpha) \right\}$ is increasing in $y>0$. 
\bigskip

\noindent Define a corresponding Selective-SDCI procedure as follows.

\begin{definition} \label{def:fcr-ssdci}
Level-$q$ BY-adjusted Selective-SDCI procedure
\end{definition}

\begin{enumerate}
 \item Let $Y_{(i)}$ be the estimate with the $i$-th largest absolute value, i.e., $|Y_{(m)}| \leq |Y_{(m-1)}| \leq ... \leq |Y_{(1)}|$

 \item Denoting $CI_i(\alpha) = \calC(Y_i;\alpha)$, find 
 \begin{equation*}
 R = \max \left\{ r: CI_{(r)}\left(\frac{r \cdot q}{m}\right) \text{ is contained in $(-\infty,0]$ or in $(0,\infty)$}\right\}
 \end{equation*}
 and let $\calS^*(\yvec) = \left\{ i: |Y_i| \geq Y_{ (R)} \right\}$ be the (possibly empty) set of selected parameters. 
 \item For each $i \in \calS^*(\yvec)$, construct the confidence interval
\begin{align*}
CI_i \left(\frac{R\cdot q}{m}\right).
\end{align*}
\end{enumerate}

\begin{theorem} \label{thm:ssdci}
Suppose that $Y_i\sim f(y_i-\theta_i), \ i=1,...,m$ are independent and let $\{\calC(\cdot; \alpha): 0\leq \alpha \leq 1\}$ be a marginal confidence interval procedure satisfying Requirement (MON 1) and Requirement (MON 2). 
Then the procedure in Definition~\ref{def:fcr-ssdci} enjoys $\fcr \leq q$.
\end{theorem}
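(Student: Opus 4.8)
The plan is to show that the procedure of Definition~\ref{def:fcr-ssdci} is merely a special instance of the BY FCR-adjusted selective-CI procedure (Definition~\ref{def:BY}) applied with the particular data-driven selection rule $\calS^*$, so that the conclusion $\fcr \le q$ follows at once from the FCR control already established for that scheme under independence. To carry this out I must (i) identify $\calS^*$ as a step-up rule on the absolute values $|Y_i|$, and (ii) verify that the adjustment level $Rq/m$ used in step 3 of Definition~\ref{def:fcr-ssdci} coincides with the BY level $R_{\min}(\yvec^{(i)})\,q/m$ for every selected $i$.

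For (i), the first step is to translate the sign-determining condition into a threshold on $|Y_i|$. Fix $\alpha$ and write $l(y;\alpha) = \inf \calC(y;\alpha)$. For $y>0$ the interval $\calC(y;\alpha)$ contains only positive values iff $l(y;\alpha)>0$; by (MON 2) $l(\cdot;\alpha)$ is increasing on $(0,\infty)$, so there is a cutoff $\tau(\alpha) := \inf\{y>0 : l(y;\alpha)>0\}$ with $\calC(y;\alpha)\subseteq(0,\infty)$ iff $y \ge \tau(\alpha)$. Using the symmetry $\calC(-y;\alpha) = -\calC(y;\alpha)$ from (MON 2), the same cutoff governs the negative side, so $CI_i(\alpha)$ is sign-determining iff $|Y_i| \ge \tau(\alpha)$. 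Moreover (MON 1) forces $l(y;\alpha)$ to be non-decreasing in $\alpha$ (a higher confidence level gives a wider interval with a smaller lower endpoint), whence $\tau(\alpha)$ is non-increasing in $\alpha$ and the critical values $\tau(rq/m)$ are non-increasing in $r$. Consequently $R = \max\{r : |Y_{(r)}| \ge \tau(rq/m)\}$ is exactly the cutoff of a step-up procedure applied to the ordered absolute values (equivalently, to ascending two-sided $p$-values with ascending critical constants), and $\calS^*$ selects the $R$ observations of largest absolute value.

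For (ii), once $\calS^*$ is recognized as a step-up rule, the remark following Definition~\ref{def:BY} applies: for such a rule the cardinality $|\calS^*(\yvec^{(i)}, Y_i = y)|$ is constant over all $y$ keeping $i$ selected, so $R_{\min}(\yvec^{(i)}) = R = |\calS^*(\yvec)|$ for each $i \in \calS^*(\yvec)$. Hence the interval $CI_i(Rq/m)$ built in Definition~\ref{def:fcr-ssdci} is identical to the BY interval $CI_i\big(R_{\min}(\yvec^{(i)})q/m\big)$, the two procedures coincide, and $\fcr \le q$ follows from the FCR control of the scheme in Definition~\ref{def:BY} under independence.

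The main obstacle is step (i): carefully justifying that sign-determination collapses to the single monotone threshold $\tau(\alpha)$ (so that $\calS^*$ is a bona fide step-up rule) using only (MON 1) and (MON 2), and confirming that the step-up invariance $R_{\min}(\yvec^{(i)}) = R$ genuinely holds for this rule --- the latter being the standard, but slightly delicate, self-consistency property of step-up procedures that underlies the BY adjustment. Ties in $|Y_i|$ (a null event when $f$ is continuous) and the boundary behavior at $y=0$ can be handled by the usual conventions.
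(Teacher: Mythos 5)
Your proposal is correct and takes essentially the same approach as the paper: the paper's proof likewise identifies the procedure of Definition~\ref{def:fcr-ssdci} as the BY procedure of Definition~\ref{def:BY} applied with the selection rule $\calS^*$, uses (MON 1) and (MON 2) to turn sign determination into a step-up rule on the $|Y_i|$ with non-increasing thresholds $\tau(rq/m)$, and then verifies $R_{\min}(\yvec^{(i)}) = R$ for every selected $i$. The only difference is one of detail: the step-up invariance you defer to as ``standard'' is exactly what the paper proves explicitly in its appendix, via the comparison of the selection count with the one obtained after replacing $Y_i$ by $\infty$.
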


\begin{proof}
We show that the procedure of Definition \ref{def:fcr-ssdci} uses the BY FCR-adjusted confidence level for the constructed CIs, in other words, the Selective-SDCI procedure is just the BY procedure in Definition \ref{def:BY} for the selection rule $\calS^*$ in Definition \ref{def:fcr-ssdci}. 
This will finish the proof, as the level-$q$ BY procedure has $\fcr \leq q$ for any selection rule. 

It remains to show that for the procedure in Definition~\ref{def:fcr-ssdci}, $R_{\min}(\yvec^{(i)}) = R$, in other words, $|\calS^*(\bY^{(i)}, Y_i=y)|$ is constant over $y$ for all $y$ such that $i\in \calS^*(\yvec^{(i)},Y_i=y)$. 
Indeed, if this is true, then the constructed intervals use the BY-adjusted level and therefore $\fcr \leq q$.
This part is proved in the appendix. 
\end{proof}

For a given marginal confidence interval $\calC(y;\alpha)$ the procedure of Definition~\ref{def:fcr-ssdci} constructs the largest number possible of BY FCR-adjusted confidence intervals that determine the sign. 
Since the set of discoveries is determined based on the adjustment of a marginal confidence interval, our procedure is completely characterized by the choice of $\calC(y;\alpha)$. 
Therefore, this choice affects both the power of the procedure as a sign classification rule---the expected (say) number of intervals constructed---and the shape of the constructed intervals. 
In particular, using a marginal interval $\calC(y;\alpha)$ which determines the sign for relatively small values of $|y|$ will enhance power. 
On the other hand, if a marginal interval $\calC(y;\alpha)$ with relatively (to other marginal CIs) small maximum length is used, then the constructed confidence intervals will enjoy a relatively (to other marginal CIs adjusted at the same level) small maximum length, because only the confidence level is adjusted when constructing the intervals. 
The following examples describe the BY-adjusted Selective-SDCI procedure corresponding to three different choices of a marginal confidence interval. 
We will assume here that $Y_i-\theta_i$ are i.i.d. $N(0,1)$ and denote $z_{p} = \Phi^{-1}(1-p)$. 

\begin{enumerate}[(a)]
\item { Symmetric confidence interval}. 
Set $\calC(y; \alpha) = (y - z_{\alpha/2}, y + z_{\alpha/2})$. 

\noindent Since for any $\alpha\in (0,1)$ this confidence interval includes values of one sign only (and possibly zero) whenever $z_{\alpha/2} \leq |y|$, the algorithm in Definition~\ref{def:fcr-ssdci} selects the parameters corresponding to the $R = \max \left\{ r: \right.$ $\left\{ z_{ r \cdot q/(2m) } \leq |Y_{(r)}| \right\}$
largest observations. 
Now let $P_i = 2( 1-\Phi(|Y_i|) )$ be the two-sided p-value for testing $H_{0i}: \theta_i = 0$, and let $P_{(1)} \leq P_{(2)} \leq ... \leq P_{(m)}$ be the ordered p-values (note that the subscript of the order statistic has the conventional meaning for the p-values but not for the estimators). 
Then $R = \max\left\{ r: P_{(r)} \leq r \cdot q / m \right\}$ and so the selected parameters are exactly those corresponding to hypotheses rejected by the BH procedure applied at level $q$. 
The constructed confidence interval for each selected parameter $\theta_i$ is \ $CI_i = Y_i \pm z_{ R \cdot q/(2m) }$. 

\item {One-sided confidence interval} \footnote{For lack of a better term we refer to the CI in \eqref{eq:one-sided} as ``one-sided", although this name is usually reserved for a CI of the form $(y-z_{\alpha},\infty)$ or $(-\infty, y+z_{\alpha})$}. 
Take 
\begin{align} \label{eq:one-sided}
\calC(y; \alpha) = 
\begin{cases}
(-\infty, \infty), & \ -z_{\alpha} < y < z_{\alpha} \\
(0, \infty), & \ z_{\alpha} \leq y \\
(-\infty, 0], & \ y \leq -z_{\alpha}.
\end{cases}
\end{align}
For any $\alpha$ this confidence interval includes values of one sign only already when $z_{\alpha} \leq |y|$. 
Our procedure therefore selects the set of parameters corresponding to the  
$R = \max \left\{ r: z_{ r \cdot q/m } \leq |Y_{(r)}| \right\}$ $ = \max\left\{ r: P_{(r)} \leq r \right.$ $\left. \cdot (2 q) / m \right\}$
largest observations, which is the set of parameters rejected by the BH procedure when applied at level $2q$. 
The constructed confidence interval for each selected parameter $\theta_i$ is $CI_i = (0, \infty)$ if $0 < Y_i$ and $CI_i = (-\infty,0]$ if $Y_i < 0$. 

\item { Pratt's confidence interval}\footnote{The original CI suggested by Pratt treats zero ``symmetricly", whereas we append zero to the negative part of the line; \eqref{eq:pratt} is therefore slightly different from the original construction, but the difference is not essential. }. 
We can use a more sophisticated one-sided interval, 
\begin{equation} \label{eq:pratt}
\calC(y;\alpha) = 
\begin{cases}
(y-z_{\alpha}, y+z_{\alpha}),\ &\text{if \ $|y|<z_{\alpha}$} \\
(0,y+z_{\alpha}),\ &\text{if \ $z_{\alpha} \leq y$} \\
(y-z_{\alpha},0],\ &\text{if \ $y\leq -z_{\alpha}$}
\end{cases}. 
\end{equation}
This construction was suggested by \citet{pratt1961length}, who sought to minimize the expected length of a confidence interval at $\theta=0$. 
Pratt's interval still determines the sign at $z_{\alpha}$ but its length is finite when it determines the sign, as opposed to the usual one-sided interval. 
The resulting FCR-adjusted selective-SDCI procedure therefore still has $R = \max \left\{ r: z_{ r \cdot q/m } \leq |Y_{(r)}| \right\}$ and selects according to a level-$2q$ BH procedure. 
However, the constructed confidence interval for a selected parameter is now
\[
CI_i = 
\begin{cases}
(0,Y_i+z_{Rq/m}),\ &\text{if \ $z_{Rq/m}<Y_i$} \\
(Y_i-z_{Rq/m},0],\ &\text{if \ $z_{Rq/m}<Y_i$}
\end{cases}
\]
instead of the infinitely long intervals that are constructed with the plain one-sided interval \eqref{eq:one-sided}. 
\end{enumerate}
It is easy to verify that all marginal confidence intervals above are valid (i.e., have $1-\alpha$ coverage) and satisfy the two monotonicity requirements (MON 1) and (MON 2). 

For a fixed $\alpha$ we would ideally want to equip the BY-adjusted Selective-SDCI procedure with a marginal interval $\calC(y;\alpha)$ which determines the sign as early as possible, and at the same time has the smallest possible (say, maximum) length. 
Unfortunately, these two requests are incompatible: early sign determination has a price of longer confidence intervals, at least for some values of $y$. 
This is demonstrated in the examples above: the two-sided marginal interval has shortest possible maximum length, but determines the sign starting only at the $1-\alpha/2$ quantile; whereas the one-sided marginal interval determines the sign already at the $1-\alpha$ quantile, but has infinite maximum length. 
The Pratt interval improves on the length of the one-sided interval ``for free", but its length is still unbounded in $y$, which is necessary if sign determination starting at the $1-\alpha$ quantile is desired. 
Consequently, if we are to use the procedure of Definition~\ref{def:fcr-ssdci}, then a trade-off between power and maximum (potential) length of the constructed intervals is unavoidable. 

Nevertheless, we are not limited to the marginal confidence intervals in (a)-(c), in which sign determination occurs at either of the two extremes, $F^{-1}(1-\alpha/2)$ or $F^{-1}(1-\alpha)$. 
Instead of insisting on earliest possible sign determination or smallest possible maximum length, we may choose a marginal confidence interval that balances between early sign determination and maximum length. 
That is, a marginal confidence interval which determines the sign starting at a value slightly bigger than the $1-\alpha$ quantile, and in turn have maximum length that is only slightly larger than twice the $1-\alpha/2$ quantile. 
Equipped with such a marginal family, the procedure of Definition~\ref{def:fcr-ssdci} will select parameters according to a BH procedure at a level close to $2q$, while controlling the length of the constructed confidence intervals.

 \citet{benjamini1998confidence} suggested a non-equivariant marginal confidence interval which is fit for the job. 
 They assume that $Y\sim f(y-\theta)$ with $f=F'$ a unimodal, symmetric density, and obtain their Quasi-Conventional (QC hereafter) by inverting a family of acceptance regions. 
Specifically, the QC interval at $y$ is defined as the convex hull of 
\begin{equation*}
\{\theta: y\in A_{QC}(\theta)\}
\end{equation*}
where
\begin{equation} \label{eq: QC AR}
A_{QC}(\theta)=
  \begin{cases}
( \theta-\cbar, \theta + \ctilde), & 0<\theta\leq \cbar \\
(0,\theta + F^{-1}(1-\alpha + F(-\theta))), & \cbar<\theta\leq \calphahalf \\
( \theta - \calphahalf, \theta + \calphahalf ), & \calphahalf < \theta
  \end{cases}
\end{equation}
and $A(\theta) = -A(-\theta)$ for $\theta<0$. 
The acceptance region at zero is symmetric in the original construction but we take 
\begin{equation*}
A_{QC}(0) = (-\infty,c_{\alpha})
\end{equation*}
which fits our (asymmetric) definition of sign determination. 
The constants $\cbar,\ctilde$ are determined by a parameter $1/2 \leq \psi < 1$ and given by 
\begin{equation*}
\cbar = F^{-1}(1-\psi\alpha)\ \ \ \ \ \ \ \ctilde = F^{-1}(1-\alpha + F(-\cbar)).
\end{equation*}
For any $p\in [0,1]$ we write $c_p = F^{-1}(1-p)$ for the $(1-p)$-th quantile of $F$. 
The QC confidence interval determines the sign for $|y|\geq \cbar\in (c_{\alpha},\calphahalf]$ and can be shown to have maximum length $\ctilde+\calphahalf<\infty$. 
The parameter $\psi$ controls the balance between early sign determination and maximum length of the QC interval. 
For $\psi=1/2$ we have $\cbar=\calphahalf$ and the usual symmetric confidence interval obtains. 
When $\psi\to 1$, $\cbar\to c_{\alpha}$ and for any fixed $y$ the Pratt interval obtains in the limit. 
As $\psi$ increases from 1/2 to 1, sign determination occurs at a gradually earlier point at the cost of an increasing maximum length. 

Since for any $\alpha$ the QC interval with $1/2 \leq \psi < 1$ determines the sign at $\cbar<\calphahalf$, the BY-adjusted Selective-SDCI procedure using the QC interval will have more power than using the symmetric interval. 
At the same time constructed intervals will be shorter as compared to using the Pratt confidence interval, and their length never exceeds $F^{-1}(1-q' + F(-F^{-1}(1-\psi q')))$ for $q'=Rq/m$ (this is just $\ctilde+\calphahalf$ for $q=q'$). 
While the QC interval already has the features that would make our procedure balance between power and length, an improvement is in fact possible. 
Indeed, we will show that the QC interval can be slightly modified so that our procedure constructs shorter intervals at no expense.

\section{A Modified Quasi-Conventional CI} \label{sec:mqc}

In this section we present a new marginal confidence interval that adopts a feature from \citet{finner1994two} to modify the QC interval of \citet{benjamini1998confidence}. 
The idea is to take advantage of the fact that in a BY-adjusted Selective-SDCI procedure only sign-determining confidence intervals are ultimately constructed; 
hence inflating the QC confidence interval whenever it anyway includes values of opposite signs, has no cost on the one hand, and on the other hand it allows to construct shorter confidence intervals when the sign is determined. 

As in  \citet{benjamini1998confidence} we make the further assumption that $f=F'$ is a unimodal density. 
We obtain the modified Quasi-Conventional (MQC hereafter) interval by modifying the acceptance regions \eqref{eq: QC AR}. 
Hence, consider
\begin{equation} \label{eq:mqc-ar}
A_{MQC}(\theta)=
  \begin{cases}
( -\bar{c}, g(\theta) ), & 0<\theta\leq \bar{c} + \calphahalf \\
( \theta - \calphahalf, \theta + \calphahalf ), & \bar{c} + \calphahalf < \theta
  \end{cases}
\end{equation}
with $A_{MQC}(\theta) = -A_{MQC}(-\theta)$ for $\theta<0$, and $A_{MQC}(0) = (-\infty,c_{\alpha})$. 
For $1/2 \leq \psi < 1$,
\begin{equation*}
\cbar = F^{-1}(1-\psi\alpha)\ \ \ \ \ \ \ \ctilde = F^{-1}(1-\alpha + F(-\cbar)),
\end{equation*}
and
\begin{equation*}
g(\theta) = \theta + F^{-1}\{ 1 - \alpha + F(-\bar{c} - \theta) \}.
\end{equation*}
As before, $\psi$ is a parameter which controls how early the confidence interval determines the sign of $\theta$, and is chosen in advance.

The MQC interval is obtained as the convex hull of $\{ \theta: y \in A_{MQC}(\theta) \}$. 
Inverting the family of acceptance regions in~\eqref{eq:mqc-ar} is more complicated than it is for the QC acceptance regions because we need to distinguish between  three cases
(i) $0 < \psi \leq \psi_1$ 
(ii) $\psi_1 < \psi \leq \psi_2$ and 
(iii) $\psi_2 < \psi$. 
Here 
\begin{align*}
&\psi_1 = \psi_1(\alpha) \text{ is the value of } \psi \text{ such that } \tilde{c} = 2\bar{c} + c_{\alpha/2}\\
&\psi_2 = \psi_2(\alpha) \text{ is the value of } \psi \text{ such that } \tilde{c} = \bar{c} + 2c_{\alpha/2}. 
\end{align*}
From a practical point of view, however, at least when $f$ is the standard normal density, $\psi_1$ tends to be very close to $1$. 
For example, when $f$ is the standard normal density and $\alpha = 0.1$, $\psi_1 > 0.999$, and it is even closer to 1 for smaller $\alpha$. 
This means that for typical, small values of $\alpha$, unless $\psi$ is chosen extremely close to $1$, we are in case (i) above. 
For clarity we present here the confidence bounds for the first case only; a full specification of the CI, which includes the other two cases, is provided in the Supplementary Material. 
Hence, for $0 < \psi \leq \psi_1$, the convex hull of $\{ \theta: y \in A_{MQC}(\theta) \}$ is given by
\begin{align} \label{eq:mqc-ci-1}
\calC_{MQC}(y;\alpha) &=
  \begin{cases}
(-\bar{c} - c_{\alpha/2}, \bar{c} + c_{\alpha/2}), & 0\leq y<\bar{c} \\
(0, y+c_{\alpha/2}), & \bar{c} \leq y <\tilde{c} \\ 
(g^{-1}(y), y+c_{\alpha/2}), & \tilde{c}\leq y\leq g(\bar{c}+c_{\alpha/2}) \\
( \bar{c} + c_{\alpha/2}, y + c_{\alpha/2} ), & g(\bar{c}+c_{\alpha/2})<y<\bar{c}+2c_{\alpha/2} \\
(y-c_{\alpha/2}, y+c_{\alpha/2}), & \bar{c}+2c_{\alpha/2}\leq y
  \end{cases}
\end{align}
with $\calC(-y;\alpha) = -\calC(y;\alpha)$. 
In~\eqref{eq:mqc-ci-1} $g^{-1}(t)$ is well defined since $g$ is strictly increasing to $\infty$ on $-\cbar + \calphahalf < t$, and in particular on $\ctilde < t$. 
The assumption that $f$ is unimodal (and symmetric) ensures that \eqref{eq:mqc-ci-1} is indeed the convex hull of $\{\theta: y\in A_{QC}(\theta)\}$. 

\begin{remark}
The MQC interval is scale invariant in the following sense: 
if $Y\sim (\theta,\sigma^2)$ and $Y'=Y/\sigma$, and $\calC(y;\alpha)$ and $\calC'(y';\alpha)$ are the MQC confidence intervals (for any fixed $\psi$) based on $Y$ and $Y'$, respectively, then $\calC(y;\alpha) = \sigma \cdot \calC'(y/\sigma;\alpha)$. 
\end{remark}

The QC and the MQC intervals determine the sign of $\theta$ starting at exactly the same value of $|y|$, but the latter constructs shorter intervals on a subset of $\{y: |y|>\cbar\}$ at the expense of wider intervals for all $|y|<\cbar$. 
On this subset, for each of the three cases above, the lower endpoint is farther away from zero; in the last two cases---that is, when $\psi_1 < \psi$---there is a discontinuity point for the lower bound just when the confidence interval separates from zero ($y=\tilde{c}$).

\begin{figure}[h]
\begin{centering}
 \includegraphics[scale=.55]{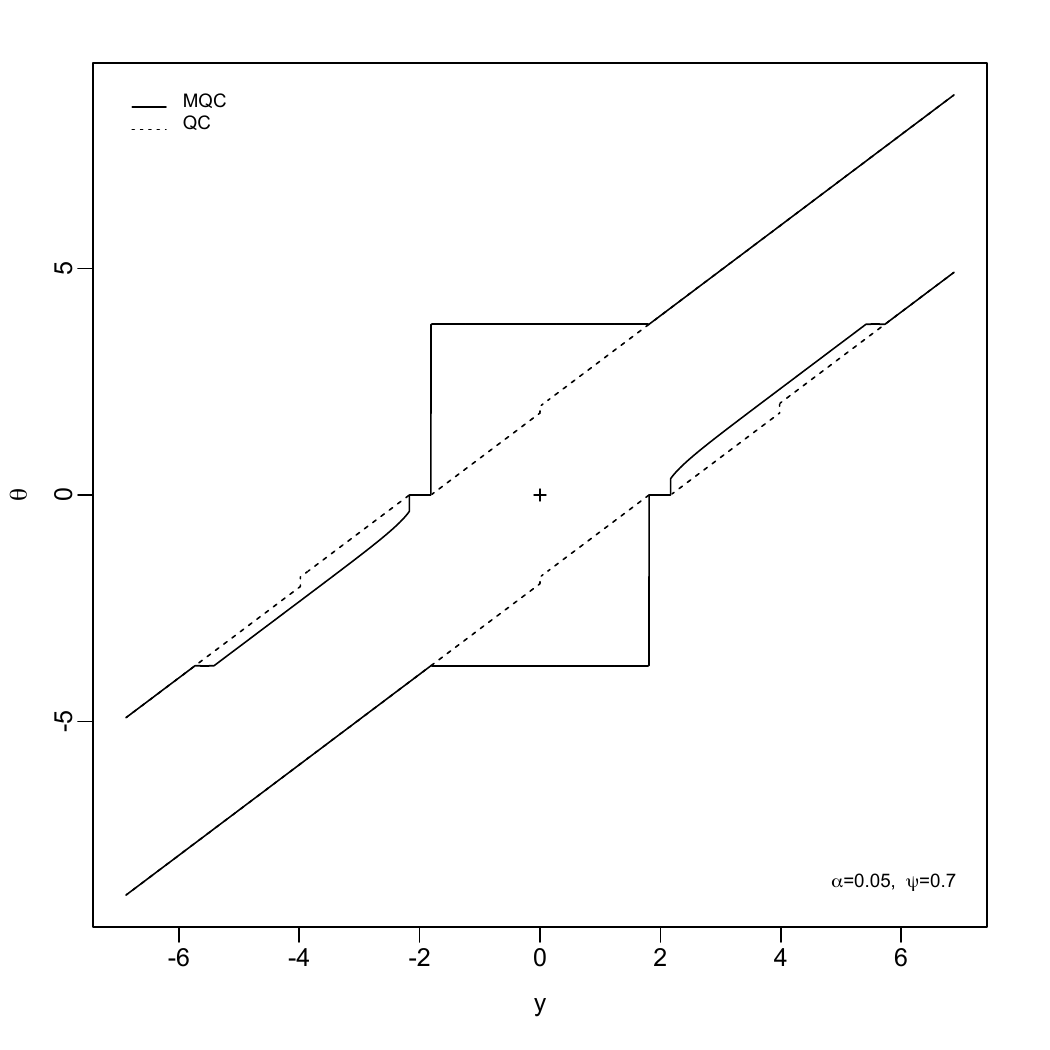}
\par\end{centering}

\caption{
MQC interval vs. the QC interval of \citeauthor{benjamini1998confidence}. 
The plot is for $\alpha = .05$ and $\psi=0.7$. 
Both confidence intervals (weakly) determine the sign of $\theta$ whenever $|y|\geq \cbar=1.81$. 
When the sign is determined, MQC bounds are farther away from zero for a range of $y$ values which begins when the confidence interval separates from zero. 
The interval around zero where the MQC confidence limits are constant in $y$ is the region where the QC interval includes both negative and positive values. 
}
\label{fig:mqc}
\end{figure}

The BY-adjusted Selective-SDCI procedure, equipped with any marginal confidence interval that satisfies requirements (MON 1) and (MON 2), has $\fcr\leq q$. 
The actual FCR level depends on the marginal confidence interval that is used. 
For the two-sided confidence interval---that is, for the BH-selected BY-adjusted procedure---\citet{benjamini2005false} show that the FCR is also lower bounded by $q/2$. 
We show a similar result for the MQC interval when the estimators are normally distributed.

\begin{theorem} \label{thm:fcr}
For independent, normally distributed estimators with a known variance, the Selective-SDCI procedure of Definition~\ref{def:fcr-ssdci} using the MQC interval with $0 < \psi < 0.9$, enjoys $\fcr \geq q/2$ if $0 < q < 0.25$.
\end{theorem}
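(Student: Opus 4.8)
The plan is to complement the upper bound of Theorem~\ref{thm:ssdci} by exhibiting a single parameter configuration at which the FCR is at least $q/2$; since the FCR depends on $\theta=(\theta_1,\dots,\theta_m)$, it suffices to lower bound $\sup_\theta\fcr$, and the natural candidate is the global null $\theta_1=\dots=\theta_m=0$. First I would record two structural facts that hold for any $\theta$. Because the MQC interval determines the sign of $\theta$ exactly when $|y|\geq\cbar(\alpha)=z_{\psi\alpha}$, the selection set $\calS^*(\yvec)$ and the index $R$ depend on $\yvec$ only through $(|Y_1|,\dots,|Y_m|)$; moreover, writing $P_i=2(1-\Phi(|Y_i|))$, the defining inequality $|Y_{(r)}|\geq z_{\psi rq/m}$ is equivalent to $P_{(r)}\leq 2\psi r q/m$, so $R$ is exactly the number of rejections of the Benjamini--Hochberg procedure applied to $P_1,\dots,P_m$ at level $2\psi q$.

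Next I would pin down the coverage dichotomy at the global null, and this is where the constants $\psi<0.9$ and $0<q<0.25$ enter. Each constructed interval is an MQC interval at the adjusted level $\alpha'=Rq/m\leq q<0.25$, and these constants are chosen so that $\psi<\psi_1(\alpha')$ for every such $\alpha'$ (recall $\psi_1$ is close to $1$ and increases as $\alpha'\downarrow 0$), which places every constructed interval in the first regime~\eqref{eq:mqc-ci-1}. In that regime a sign-determining interval has the form $(0,\,y+\calphahalf)$ when it is positive and, by the convention $A_{MQC}(0)=(-\infty,c_\alpha)$ together with $\calC(-y;\alpha)=-\calC(y;\alpha)$, the form $(\,\cdot\,,0]$ --- closed at the origin --- when it is non-positive. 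Hence at $\theta_i=0$ a positive-determining interval excludes $0$ and fails to cover, whereas a non-positive-determining interval contains $0$ and covers, so a selected interval is non-covering if and only if $Y_i>0$.

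I would then compute the FCR by conditioning on $(|Y_1|,\dots,|Y_m|)$. Given these magnitudes the set $\calS^*$ and the value $R$ are fixed, while under the global null the signs of the $Y_i$ are i.i.d.\ symmetric; thus each selected coordinate is non-covering with conditional probability $1/2$, so $\E[V_{CI}\mid |Y_1|,\dots,|Y_m|]=R/2$ by linearity. Since $R_{CI}=R$, dividing by $R\vee 1$ and taking expectations gives
\begin{equation*}
\fcr=\E\!\left[\frac{V_{CI}}{R\vee 1}\right]=\tfrac12\,\Pr(R\geq 1).
\end{equation*}
It remains to evaluate $\Pr(R\geq 1)$ at the global null. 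Since every hypothesis is null, the Benjamini--Hochberg false discovery proportion equals $1$ whenever $R\geq1$, so its FDR reduces to $\Pr(R\geq1)$; invoking the exact identity that, for independent continuous p-values under the global null, the FDR of BH at level $\beta$ equals $\beta$, with $\beta=2\psi q<1$ (guaranteed by $q<0.25$), yields $\Pr(R\geq 1)=2\psi q$. Therefore $\fcr=\psi q\geq q/2$, using $\psi\geq 1/2$, the range over which the MQC interval is defined.

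The main obstacle is the evaluation of $\Pr(R\geq 1)$: the naive first-order estimate $\Pr(R\geq1)\geq 1-(1-\beta/m)^m$ falls strictly below $\beta$ and hence cannot deliver the bound $q/2$, so the argument must rest on the exact tightness of BH at the global null rather than on a union bound. The secondary difficulty is purely bookkeeping --- verifying that the constants $\psi<0.9$ and $q<0.25$ force $\psi<\psi_1(\alpha')$ uniformly over the data-dependent levels $\alpha'=Rq/m\le q$, so that the clean coverage dichotomy of regime~\eqref{eq:mqc-ci-1} applies to every interval actually constructed; in regimes~(ii)--(iii) the lower bound detaches from the origin and a non-positive interval may exclude $0$, which would only raise the FCR but would obstruct the exact computation.
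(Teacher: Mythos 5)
There is a genuine gap here: you have re-read the theorem as a statement about $\sup_{\theta}\fcr$ and verified it only at the global null $\theta_1=\cdots=\theta_m=0$, but the theorem is a uniform statement -- $\fcr\geq q/2$ for \emph{every} parameter configuration -- just as Theorem~\ref{thm:ssdci}'s upper bound holds for every $\theta$, and just as the $q/2$ lower bound of \citet{benjamini2005false} for symmetric intervals (which this result deliberately parallels) holds for every $\theta$. The paper's proof accordingly establishes a per-parameter bound valid for all $\theta$: with $NCI$ the event that the level-$\alpha$ MQC interval both determines the sign and fails to cover, it shows $\Pr_{\theta}(NCI)\geq \alpha/2$ for every $\theta$, and then inserts this uniform bound into the BY decomposition of the FCR over the events $\{R_{CI}=k\}$. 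The uniformity is the entire point: the theorem is invoked to separate MQC from QC, whose FCR drops to $0.018<q/2$ in the simulation with $\theta_i\sim N(0,4)$; at the global null the two procedures behave essentially identically (they differ only in lower endpoints away from zero, which is immaterial when every $\theta_i=0$), so a global-null-only bound cannot deliver that contrast. Concretely, the case your argument never touches is the nontrivial one: for $0<\theta<\ctilde-\cbar$ the effective acceptance region of MQC is $(-\cbar,\ctilde)$, and one must verify $\Pr_{\theta}\bigl(Y\notin(-\cbar,\ctilde)\bigr)\geq \alpha/2$, whose worst case is $\theta=(\ctilde-\cbar)/2$ and which amounts to the inequality $\cbar+\ctilde\leq 2c_{\alpha/4}$. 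This is precisely where the hypotheses $\psi<0.9$ and $q<0.25$ do their real work in the paper (via the constant $\psi^*$, shown to exceed $0.9$ on that range of $\alpha$) -- not merely, as in your proposal, in guaranteeing that regime \eqref{eq:mqc-ci-1} applies.

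A secondary problem is that even your global-null computation relies on the abbreviated display \eqref{eq:mqc-ci-1}, under which positive sign-determining intervals are open at $0$ and negative ones closed at $0$. The full specification of MQC in the appendix -- the one consistent with exact $1-\alpha$ coverage at $\theta=0$ and with the effective acceptance region $\bar{A}(0)=(-\calphahalf,\calphahalf)$ used in the paper's proof -- assigns $[0,y+\calphahalf)$ for $\cbar\leq y<\calphahalf$ and its mirror image for negative $y$. Under that version a selected interval fails to cover $\theta_i=0$ if and only if $|Y_i|\geq c_{\alpha'/2}$ (with $\alpha'=Rq/m$), not if and only if $Y_i>0$; your sign-symmetry argument and the identity $\fcr=\psi q$ then collapse (the global-null FCR equals $q$ instead). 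The conclusion at the global null survives either way, but this shows that the endpoint conventions at zero, which you treat as a convenience, carry the coverage bookkeeping; note also that your final inequality $\psi q\geq q/2$ needs $\psi\geq 1/2$, a restriction not present in the theorem's stated range of $\psi$.
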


While Theorem~\ref{thm:fcr} asserts that, under the stated conditions, using the MQC interval ensures $FCR\geq q/2$, it is typically close to $q$. 
Indeed, for standard normal observations and for $0 < \alpha < 0.25$ and $0 < \psi < 0.9$, the probability in \eqref{eq:pNCI1} of the Appendix is approximately $q$ for all $\theta$ except for a small region where it may decrease to as low as $\alpha/2$. 
For example, when $\alpha=.01$ and $\psi=.85$, as long as $|\theta|\notin (0,.48)$ and $|\theta|\notin (6.43,7.4)$, the probability in \eqref{eq:pNCI1} is at least $0.99 \alpha$. 
Hence, the inequality in \eqref{eq: NCI} can often be made much tighter and $\fcr \approx q$.
We emphasize that if the original QC interval is used in the Selective-SDCI procedure instead of the MQC interval, the FCR may fall significantly below $q/2$, as demonstrated in the simulation of Section \ref{sec:simulation}. 

\section{Sign determination by confidence regions} \label{sec:conf-regions}
 
 In this section we extend our methodology from the one-dimensional case to the case that $\boldsymbol{\theta}_i \in {\mathbb R}^k$ with $k > 1$.
 In principle, it is possible to classify $\boldsymbol{\theta}_i$ as having one of $2^k$ possible signs. 
 Here we consider a straightforward extension of the 1-dimensional case in which 
 $\boldsymbol{\theta}_i = (\boldsymbol{\theta}_{i 1}, \theta_{i 2})$, with $\boldsymbol{\theta}_{i 1} \in {\mathbb R}^{k-1}$ and $\theta_{i 2} \in {\mathbb R}^1$
 where we try to classify  $\boldsymbol{\theta}_i$ according to the sign of $\theta_{i 2}$.
In Section S.1 of the Supplementary Material we apply this methodology for classifying the sign of the genetic association of almost half a million SNPs 
and then assessing whether the effect of the SNP is recessive or dominant.

 \smallskip
Let  $\bY_i = ( Y_{i 1}, Y_{i 2})$, with independent $Y_{i 1} \sim f(y_{i 1}  - \theta_{i 1})$ and $Y_{i 2} \sim f(y_{i 2}  - \theta_{i 2})$. 
 $CI_{i 1} ( \alpha) = CI_{i 1}(\alpha; Y_{i 1})$ is a marginal $1 - \alpha$ confidence interval for $\theta_{i 1}$ and 
 $CI_{i 2} ( \alpha) = CI_{i 2}(\alpha; Y_{i 2})$ is a marginal $1 - \alpha$ confidence interval for $\theta_{i 2}$ (assume that the coverage probability is exactly $1-\alpha$, not more). 
We use $CI_{i 1} ( \alpha_1 )$ and $CI_{i 2}(\alpha_2)$ to form a $1 - \alpha_1 \cdot \alpha_2$
confidence set for $\boldsymbol{\theta}_i$,
\[
\widetilde{CI}_{i} ( \alpha_1, \alpha_2) = \{ \boldsymbol{\theta}_i : \;  \theta_{i 1} \in CI_{i 1} ( \alpha_1),  \theta_{i 2} \in CI_{i 2} ( \alpha_2) \ \}.
\]
For independent $\bY_1, ..., \bY_m$ and a selection rule that has $R_{\min}(\bY^{(i)}) \equiv R_{CI}$, 
BY show that the FCR is equal to 
$\sum_{r  = 1}^m \sum_{i = 1}^m \Pr( | {\cal S(\bY)} | = r, \tilde{NCI}_i) / r$,
where $\tilde{NCI}_i$ is the event that $\boldsymbol{\theta}_i$ is selected and 
$\boldsymbol{\theta}_i \notin \widetilde{CI}_i $.
We denote by $\tilde{NCI}_{i 1}$ the event that $\boldsymbol{\theta}_i$ is selected and  $\theta_{i 1} \notin {CI}_{i 1}$ and 
by $\tilde{NCI}_{i 2}$ the event that $\boldsymbol{\theta}_i$ is selected and $\theta_{i 2} \notin {CI}_{i 2}$.
Thus $\tilde{NCI}_i = \tilde{NCI}_{i 1} \cup \tilde{NCI}_{i 2}$,
and to evaluate FCR we express $\tilde{NCI}_i$ as the disjoint union
\[
\tilde{NCI}_i = \tilde{NCI}_{i 1} \cup ( \{\theta_{i 1} \in CI_{i 1}\} \cap  \tilde{NCI}_{i 2}).
\]
We consider selection rules ${\cal S}_2 ( \mathbf{Y}_{ \bullet 2})$ that are
 determined by only $\mathbf{Y}_{ \bullet 2} = ( Y_{1 2} ,..., Y_{m 2})$. 

\begin{definition} \label{def--BY2}
Level-$(q_1, q_2)$  FCR adjustment for selection rules determined  by $\mathbf{Y}_{ \bullet 2}$
\begin{enumerate}
\item Apply the selection criterion ${\cal S}_2$ to obtain ${\cal S}_2 ( \mathbf{Y}_{ \bullet 2})$.
\item For each selected parameter $\boldsymbol{\theta}_i, \ i \in {\cal S}_2 ( \mathbf{Y}_{ \bullet 2})$, let
\begin{equation}
R_{\min}(\mathbf{Y}_{\bullet 2}^{(i)}) = \min_t \left\{ \left| \calS(\mathbf{Y}_{\bullet 2}^{(i)}, Y_{i 2} =t) \right| : 
i \in \calS(\mathbf{Y}_{\bullet 2}^{(i)}, Y_{i 2}=t) \right\},
\end{equation}
where $Y_{\bullet 2}^{(i)}$ is the vector obtained by omitting $Y_{i 2}$ from $\mathbf{Y}_{\bullet 2}$.
\item For each selected parameter $\boldsymbol{\theta}_i, \ i \in {\cal S}_2 ( \mathbf{Y}_{ \bullet 2})$, construct the following CI:
\[
\widetilde{CI}_i\left(   q_1, \frac{R_{\min}(\mathbf{Y}_{\bullet 2}^{(i)}) \cdot q_2}{m} \right).
\]
\end{enumerate}
\end{definition}

\begin{theorem} \label{thm:fcr-2dim}
Let $\mathbf{Y}_1, ..., \mathbf{Y}_m$ be independent where $\mathbf{Y}_i=(Y_{i1},Y_{i1})$ for independent $Y_{i1}$ and $Y_{i2}$. 
Then the FCR of the level-$(q_1, q_2)$  adjusted confidence sets of Definition \ref{def--BY2} 
for  ${\cal S}_2 ( \mathbf{Y}_{ \bullet 2})$  is
\[
\fcr( \widetilde{CI}_{\bullet} ; {\cal S}; q_1, q_2) = 
q_1  +  (1 - q_1) \cdot  \fcr( CI_{\bullet 2} ;  \ {\cal S}_2 ; \ q_2)
\]
where $\fcr( CI_{\bullet 2} ;  \ {\cal S}_2 ; \ q_2)$ is the FCR of  
the level $q_2$  BY FCR-adjusted CI for ${\cal S}_2$.
\end{theorem}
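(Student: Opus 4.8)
The plan is to start from the exact FCR representation that BY supply for the rectangular procedure and then split the non-coverage event along the first coordinate, the key structural fact being that the entire selection mechanism lives on $\mathbf{Y}_{\bullet 2}$ while first-coordinate coverage depends only on the $Y_{i1}$'s. For a rule with $R_{\min}(\mathbf{Y}_{\bullet 2}^{(i)}) \equiv R_{CI}$ we have $\fcr(\widetilde{CI}_\bullet;\calS_2;q_1,q_2) = \sum_{r=1}^m\sum_{i=1}^m \Pr(|\calS_2| = r, \tilde{NCI}_i)/r$, where $\tilde{NCI}_i$ is the event that $i$ is selected and $\theta_i \notin \widetilde{CI}_i(q_1, R_{\min}(\mathbf{Y}_{\bullet 2}^{(i)})q_2/m)$. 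The first step is to invoke the disjoint decomposition already recorded in the text, $\tilde{NCI}_i = \tilde{NCI}_{i1} \sqcup (\{\theta_{i1}\in CI_{i1}\}\cap \tilde{NCI}_{i2})$: the two pieces disagree on whether $\theta_{i1}\in CI_{i1}(q_1)$ and hence are disjoint, and their union is $\tilde{NCI}_i$ because $\tilde{NCI}_i = \tilde{NCI}_{i1}\cup\tilde{NCI}_{i2}$. This splits the FCR into two sums to be evaluated separately.

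For the first sum I would condition out the first coordinate. The events $\{|\calS_2| = r\}$ and $\{i \in \calS_2\}$ are functions of $\mathbf{Y}_{\bullet 2}$ alone, whereas $\{\theta_{i1}\notin CI_{i1}(q_1)\}$ is a function of $Y_{i1}$; by the assumed independence of $Y_{i1}$ from $\mathbf{Y}_{\bullet 2}$ these factor, and the exact-coverage hypothesis gives $\Pr(\theta_{i1}\notin CI_{i1}(q_1)) = q_1$. Hence the first sum equals $q_1\sum_{r,i}\Pr(|\calS_2|=r, i\in\calS_2)/r$. The inner sum collapses through the identity $\sum_{i=1}^m \Pr(|\calS_2|=r, i\in\calS_2) = r\Pr(|\calS_2|=r)$ (since $\sum_{i=1}^m \mathbf{1}(i\in\calS_2) = |\calS_2|$), so the factors of $r$ cancel and the first sum reduces to $q_1\Pr(|\calS_2|\geq 1)$, i.e. the $q_1$ term.

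For the second sum I would factor in exactly the same manner, now pulling out $\Pr(\theta_{i1}\in CI_{i1}(q_1)) = 1 - q_1$. The crucial observation is that the data-dependent adjusted level $R_{\min}(\mathbf{Y}_{\bullet 2}^{(i)})q_2/m$ and the coverage event $\{\theta_{i2}\notin CI_{i2}(\cdot)\}$ both live on $\mathbf{Y}_{\bullet 2}$ and so remain independent of the first-coordinate coverage event. This leaves $\sum_{r,i}\Pr(|\calS_2|=r, i\in\calS_2, \theta_{i2}\notin CI_{i2}(R_{\min}(\mathbf{Y}_{\bullet 2}^{(i)})q_2/m))/r$, which I recognize verbatim as the BY representation of $\fcr(CI_{\bullet 2}; \calS_2; q_2)$, the FCR of the purely one-dimensional level-$q_2$ BY-adjusted procedure applied to $\mathbf{Y}_{\bullet 2}$ with selection rule $\calS_2$ from Definition~\ref{def--BY2}. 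Adding the two sums yields $q_1 + (1-q_1)\fcr(CI_{\bullet 2};\calS_2;q_2)$.

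The main obstacle is bookkeeping the independence cleanly rather than any sharp inequality: one must check that \emph{every} object touched by selection and the coordinate-2 adjustment — the set $\calS_2$, its cardinality, the data-dependent level $R_{\min}(\mathbf{Y}_{\bullet 2}^{(i)})$, and the coordinate-2 non-coverage event — is measurable with respect to $\mathbf{Y}_{\bullet 2}$, so that the coordinate-1 coverage event genuinely factors out in both sums. The only mild point of care is the first term: tracking it exactly produces $q_1\Pr(|\calS_2|\geq 1)$, which equals the stated $q_1$ precisely when the rule selects at least one parameter with probability one (and is otherwise smaller). Once this is granted, identifying the residual coordinate-2 sum as the one-dimensional FCR is immediate and needs no re-derivation of the BY bound, since the coordinate-2 construction in Definition~\ref{def--BY2} is literally the one-dimensional BY procedure run on $\mathbf{Y}_{\bullet 2}$.
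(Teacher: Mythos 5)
Your proof is correct and follows essentially the same route as the paper's: the same BY representation of the FCR, the same disjoint decomposition $\tilde{NCI}_i = \tilde{NCI}_{i1} \cup (\{\theta_{i1}\in CI_{i1}\}\cap \tilde{NCI}_{i2})$, the same independence factorization using measurability with respect to $\mathbf{Y}_{\bullet 2}$, and the same collapsing identity $\sum_{i=1}^m \Pr(|\calS_2|=r,\ i\in\calS_2) = r\cdot\Pr(|\calS_2|=r)$. Your side remark that the first term is really $q_1\Pr(|\calS_2|\geq 1)$ is a point the paper silently glosses over, and is the correct reading: the stated equality holds exactly only when the selection rule is nonempty with probability one, and otherwise the FCR is smaller.
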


\section{Simulation study} \label{sec:simulation}

We carried out two different simulations that demonstrate the performance of the BY-adjusted Selective-SDCI procedure using the MQC interval. 
Additionally, in Section S.2 
(see supplement) we report the results of a simulation in which we examined Selective-SDCI procedures under dependency. 
The first simulation illustrates the asymmetric shape of the MQC intervals and its increased power to classify the sign of parameters over the BH directional procedure. 
We took $m=200$ parameters where $\theta_1,...,\theta_{160}$ were sampled from an exponential distribution with mean $0.5$, and $\theta_{161},..., \theta_{200}$ were sampled from a $N(3,1)$ distribution. 
Each $\theta_i$ was then randomly assigned a positive or a negative sign. 
The independent observations are $Y_1,...,Y_{200}$ with $Y_i \sim N(\theta_i, 1)$.
Figure~\ref{fig:simulation} shows the constructed intervals for positive $\theta_i$ when the procedure of Definition \ref{def:fcr-ssdci} is equipped with the MQC interval ($\psi=0.85$) and applied at level $q=0.2$. 
A total of 74 sign-determining CIs were constructed, 32 of them for positive observations. 
The number of parameters selected is almost as large as the number selected with a BH directional procedure at level $2q = 0.4$ (77) and much larger than a BH directional procedure at level $q=0.2$ (55). 
Meanwhile, the MQC constructed CIs (vertical segments in the figure) are relatively short---at the most part even shorter than the symmetric FCR-adjusted confidence intervals for level-$q$ BH-selected parameters (partly thanks to the fact that more parameters are selected). 
Out of the 74 constructed CIs 14 did not cover the respective parameter (6 of which for positive observations), a proportion of $0.19$. 
The procedure using the QC interval ($\psi=0.85$) instead of MQC, constructed the same number of intervals with a false coverage proportion of $0.15$.

\begin{figure}[h]
\begin{centering}
 \includegraphics[scale=.6]{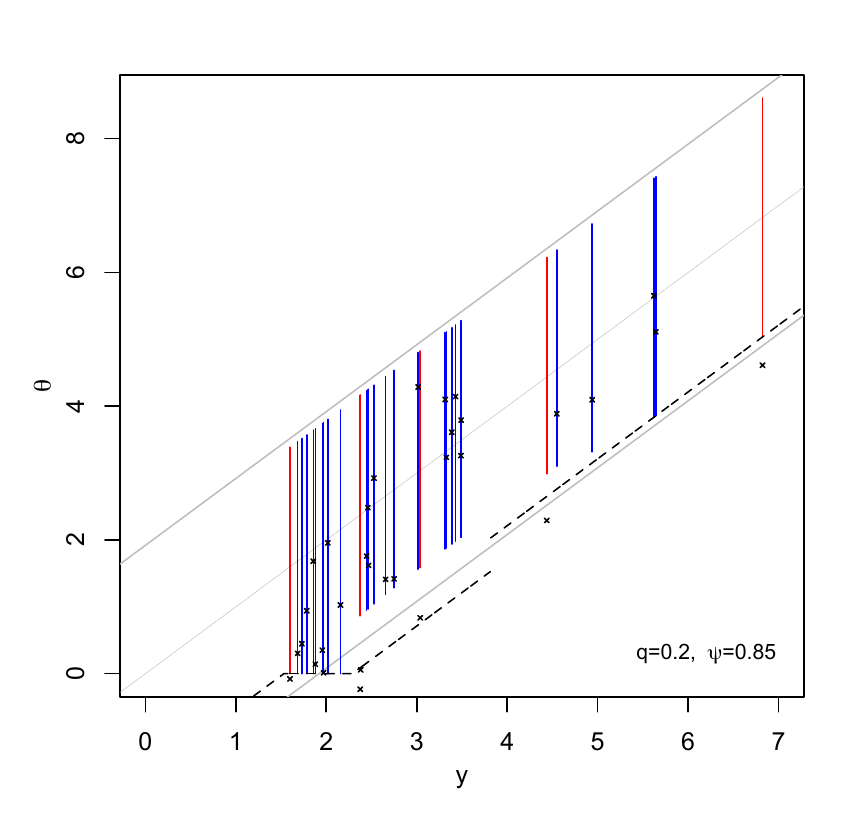}
\par\end{centering}

\caption{
MQC sign-determining confidence intervals. 
Level $q=0.2$ BY-adjusted sign-determining MQC confidence intervals were constructed for a total of 74 out of $m=300$ parameters, 14 of the confidence intervals do not cover the respective parameter. 
Vertical lines display MQC adjusted confidence intervals for the 32 positive observations: 26 of them cover the respective parameter (blue bars) and 6 of them do not (red bars). 
Diagonal line running through the origin is the identity line. 
Markers denote pairs $(Y_i,\theta_i)$. 
Solid gray lines mark the position of level 0.2 FCR-adjusted two-sided CIs for level 0.2 BH-selected parameters. 
Broken line represents the lower boundary of the constructed QC intervals.
 \label{fig:simulation}
}
\end{figure}

The second simulation compares the (actual) FCR of the MQC-equipped procedure with that of the QC-equipped procedure. 
We first sampled $\theta_1,...,$ $\theta_{300}$ from a $N(0,4)$ distribution. 
For $N=10^4$ data sets $\bY = (Y_1,...,Y_{300})$ with $Y_i \sim N(\theta_i, 1)$, we computed the false coverage proportion (FCP, denoted by $Q_{CI}$ in section \ref{sec:review}) for a level $q=0.05$ BY-adjusted Selective-SDCI procedure using the MQC interval ($\psi=0.85$) and for the same procedure using a QC interval ($\psi=0.85$). 
We used $\psi=0.85$ for both the QC and the MQC intervals so that sign determination occurs at the same value for both intervals. 
The average FCP for QC was $0.018$ ($\widehat{SD} = 1.3\cdot  10^{-4}$) and for MQC it was 0.048 ($\widehat{SD} = 2.2\cdot  10^{-4}$). 
These results confirm that, as discussed in the pervious section, the FCR when using the MQC interval is often very close to $q$ whereas it may fall below $q/2$ when using the QC interval.

\section{Detecting the sign of correlations in a social neuroscience study} \label{sec:example}

\citet{tom2007neural} carried out an experiment in an attempt to associate neural activity in the brain with behavioral ``loss aversion". 
Their study received high publicity, and the collected data was reanalyzed in \citet{poldrack2009independence} and in \citet{rosenblatt2014neuroimage}. 
The original data was made available through the \textit{OpenfMRI} initiative at \url{https://openfmri.org/dataset/ds000005} and described in detail in the paper by \citeauthor{tom2007neural}
For each of 16 subjects a behavioral loss aversion index was measured along with a neural index at each brain voxel. 
The voxel-specific correlations between behavioral index and neural index were then used to detect brain regions that are associated with loss aversion. 
\citet{rosenblatt2014neuroimage} revisited this dataset and explored different methods to construct confidence intervals which account for selection bias in reported voxels. 
Their approach is, in general, to employ a two-stage procedure where the first stage is in principle designed to detect nonzero correlations; at the second stage they construct a confidence interval for each parameter selected (rejected) at the first stage, while attempting to control the FCR below some pre-specified level. 

Specifically, one of the schemes they used is selection via the BH procedure. 
As we are interested in sign classification rather than two-sided testing, we view the BH procedure here as a directional procedure, namely, as a procedure which classifies the sign of each reported parameter as strictly positive or strictly negative. 
If willing to settle for weak (rather than strict) sign determination, our method suggests an alternative which tends to discover more parameters. 
Thus, we apply our method to the $z$-scores computed for each voxel for the Fisher-transformed correlations, and which were processed by \citet{rosenblatt2014neuroimage} and kindly made available to us. 
The concern about validity of our procedure under dependency, which is likely to be present in the current example, is mitigated by the simulations results from Section S.2 
of the Supplementary Material. 

A level 0.1 directional-BH  procedure applied to the two-sided p-values found 18,844 voxels for which a strict sign decision can be made. 
Meanwhile, a level 0.1 BY-adjusted Selective-SDCI procedure using the MQC interval with $\psi=0.85$ was able to weakly classify the sign of a total of 36,131 correlations, where for 27,117 of these a strict sign classification was made. 
For comparison, the BY-adjusted Selective-SDCI procedure using a one-sided (or Pratt's) confidence interval, which selects according to BH procedure at level 0.2, reports 43,804 parameters, all signs weakly classified. 
Hence the BH at half the level makes 57\% less discoveries, all with strict sign classification; 
whereas the MQC-equipped BY-adjusted Selective-SDCI at half the level makes only 18\% less discoveries, the majority of them with strict sign classification. 
Figure~\ref{fig:example-sign} displays the MQC confidence intervals constructed for the 33,856 correlations classified as positive, along with the QC intervals. 
The symmetric intervals corresponding to selection according to a level 0.1 BH procedure is also shown for reference. 
It is seen in the figure that for a majority of the discoveries, the lower endpoint of the MQC interval is farther away from zero than that of the QC interval, even though the latter yields the same set of discoveries. 
Note that the gap between the lower endpoint of MQC (black points in figure) and the lower endpoint of QC (gray line in figure) is largest immediately as the two intervals separate from the horizontal axis, which is exactly where we would like the gap to be largest: 
it is more important to be able to quote an endpoint farther from zero for a small detected correlation than it is for a very large detected correlation. 

We emphasize that the intervals constructed by the Selective-SDCI procedure using any of the configurations (i.e., any of the marginal confidence intervals) above are sign-determining. 
Hence, this is a partial response to the request of \citet{rosenblatt2014neuroimage}, who comment that ``it might be of interest to develop CIs that are dual to the selection methods used in neuroimaging".

Instead of detecting positive or negative correlations, it is reasonable that a researcher would be interested in detecting the large correlations, positive or negative. 
In Section S.5 
of the appendix we present an extension of the BY-adjusted Selective-SDCI procedure which allows to detect correlations $\rho_i>\rho_0$ or $\rho_i<-\rho_0$ for some pre-specified constant $\rho_0\in (0,1)$, and supplement decisions with compatible confidence intervals. 
Figure \ref{fig:example-delta} displays the 9 constructed BY-adjusted Selective confidence intervals for $\rho_0=0.2$.

\section{Discussion}\label{sec:discussion}

Selective inference refers to the general situation where the target of inference is chosen adaptively---only after seeing the data. 
We concentrated on a setup where selective inference arises in connection to multiplicity: 
the analyst collects noisy observations on a (typically large) number $m$ of unknown parameters, which he will use to first try and answer a primary question about each parameter, and second to construct CIs for only the parameters for which there was enough evidence to answer the primary question. 
Specifically, we considered the problem of detecting the sign of parameters, and supplementing each directional decision made with a CI. 
Because the same data is used for detection and for construction of the follow-up CIs, selection needs to be accounted for.

\begin{figure}[h]
\centering
  \begin{subfigure}[b]{0.35\textwidth}
    \includegraphics[width=\textwidth,height=.4\textheight]{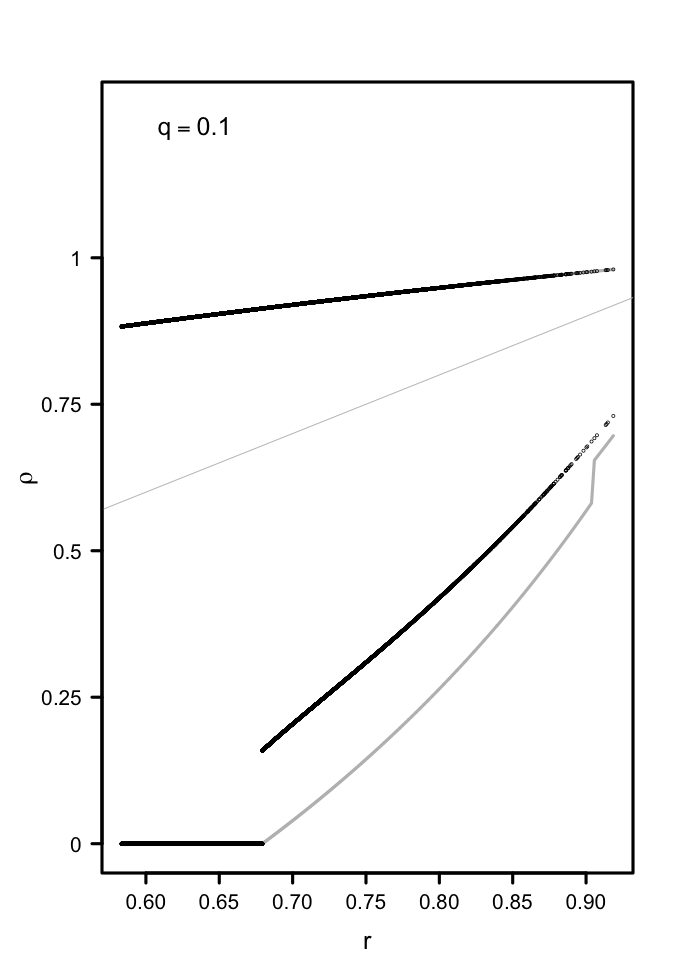}
    \caption{}
    \label{fig:example-sign}
  \end{subfigure}
  \quad
  \begin{subfigure}[b]{0.35\textwidth}
    \includegraphics[width=\textwidth,height=.4\textheight]{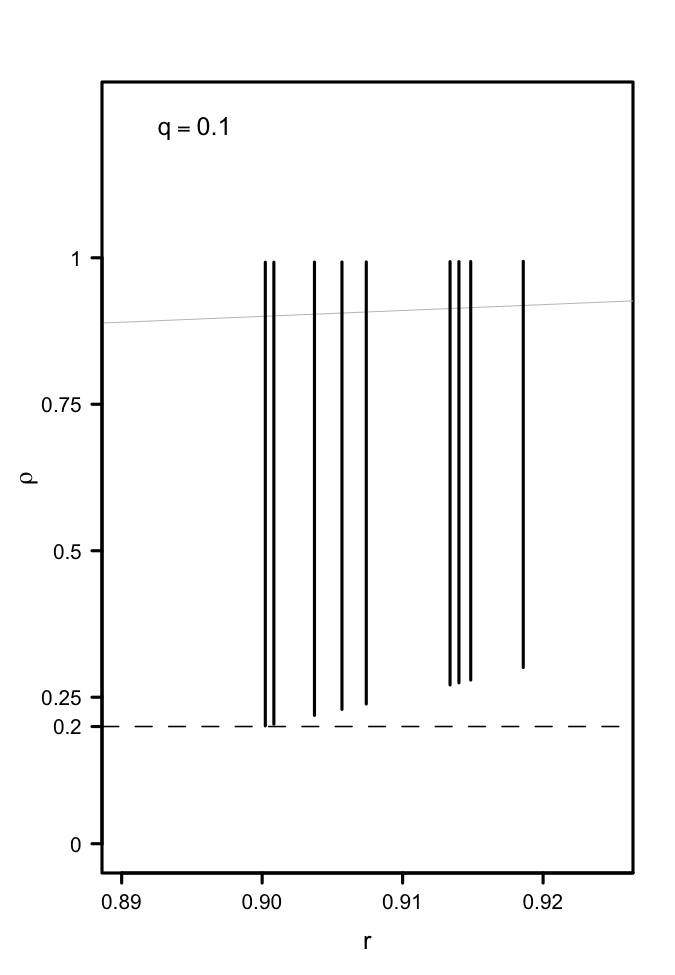}
    \caption{}
    \label{fig:example-delta}
  \end{subfigure}
  \caption{Selective CIs which determine the sign (left panel) and selective CIs which detect large correlations (right panel), data from \citet{tom2007neural}.
  CIs shown as vertical bars on right panel; on left panel only lower and upper endpoints of CI are shown. 
  In both panels observed correlations are on horizontal axis, vertical axis represents true correlation values; light gray solid line is the identity line. 
  (a) Black points correspond to MQC confidence intervals, and gray lines to QC confidence intervals, for the 33,856 correlations classified as positive. 
  The upper endpoints for the two methods coincide, while the lower endpoint of MQC is father away from zero. 
  (b) Requiring selective CIs to include only correlation values $\rho>0.2$ or only values $\rho<-0.2$, the $\text{MQC}_\delta$-equipped procedure constructs such intervals for 9 out of the original 382,362 voxels. 
No correlations $<-0.2$ were detected. 
}
\end{figure}


Requiring weak directional-FDR control at the first stage and FCR control at the second stage, a natural approach to the problem is to treat the two stages separately. 
Thus, one could first apply the directional-BH procedure to select a subset of parameters whose signs will be classified, and then construct, for each selected parameter, a CI that is  valid {\it conditionally} on selection \citep[for example by using the methods of][]{weinstein2013selection}. 
Constructing conditional CIs is appealing from various aspects, one of them being the fact that a conditional CI (usually) has the property of converging to the unadjusted CI for a large value of the observation. 
However, there is a drawback to constructing conditional CIs, namely, one cannot guarantee that the CI is compatible with the directional decision of the first stage. 
In other words, there is no conditional CI that, for all values of the parameter, with probability one includes either only positive or only non-positive values: see Section S.3 of the supplementary material, where we discuss connections to existing work on post-selection inference. 
Hence, to ensure compatibility of the follow-up CI with the directional decision, we combine the two inferential goals by requiring simply the construction of (a selective set of) {\it sign-determining} CIs. 

While the focus was predominantly on the sign problem, the approach we suggest is quite general. 
For example, in the supplement 
(Section S.5) we show how to modify the procedure of Definition \ref{def:fcr-ssdci} so that instead of sign classification, the primary goal is to detect parameters larger than $\delta$ or smaller than $-\delta$. 
In general, suppose that the data is $\bY_i\stackrel{ind}{\sim} f(\by;\boldsymbol{\theta}_i),\ i=1,...,m$ where $\bY_i \in \mathcal{Y}$ and $\boldsymbol{\theta}_i \in \Theta$. 
Let $\Theta_j \subseteq \Theta,\ j=1,...,k$ be disjoint subsets in the parameter space. 
The primary task is to detect membership of the $\boldsymbol{\theta}_i$ to any of the $\Theta_j$; the secondary task is to construct a confidence set $C_i$ for each classified parameter, such that $C_i\subseteq \Theta_j$ if $\boldsymbol{\theta}_i$ was classified to $\Theta_j$. 
For hypothesis testing $k=1$ and $\Theta_1$ is the set of alternatives; 
for the (weak) sign problem $k=2$ and $\Theta_1=(0, \infty),\ \Theta_2=(-\infty,0]$; 
the example of Section S.5 
(see supplement) corresponds to $k=2$ and $\Theta_1=(\delta,\infty),\ \Theta_2=(-\infty,-\delta)$; 
in Section S.1 
(see supplement) $\mathcal{Y}=\Theta=\mathbb{R}^2$ and $k=2$ and $\Theta_1=(-\infty,\infty)\times (0, \infty),\ \Theta_2=(-\infty,\infty)\times (-\infty,0]$. 
In principle, the extension of the procedure in Definition \ref{def:fcr-ssdci} to the general case would be to construct the maximum number of FCR-adjusted confidence sets such that each confidence set is contained in one of the subsets $\Theta_j, \ j=1,...,k$.

There are certainly remaining challenges. 
When $Y_i\sim N(\theta_i,\sigma^2)$ and $\sigma$ is unknown, 
\citet[][Section 3]{finner1994two} pointed out that a disadvantage of the CIs based on the $t$ statistic is that they are unbiased for the (natural) parameter $\theta_i/\sigma$, not $\theta_i$, and suggested an alternative CI which improves uniformly over the $t$ procedure. 
It might be of interest to try and modify \citeauthor{finner1994two}'s CI to produce an interval with similar properties as the MQC interval; for the corresponding procedure of Definition \ref{def:fcr-ssdci} to be valid, the monotonicity requirements would need to be checked, which might not be trivial. 
Another direction worth exploring is constructing sign-determining CIs for coefficients $\beta_j$ in a linear regression model; \citet{barber2016knockoff} address sign classification under directional-FDR control in the Gaussian linear model. 
To supplement such directional decisions with compatible confidence bounds is of clear practical importance.

Lastly, we think that an important issue is establishing a benchmark against which our procedure can be evaluated: while our procedure balances between power and length of constructed CIs, it is indexed by a single scalar parameter ($\psi$); it is natural to ask if more can be gained---for example, in the form of shorter CIs---when allowing more flexibility in constructing selective CIs that determine the sign. 
To be able to compare different procedures, a reasonable option is to set up a formal criterion which will take into account both power and the shape of constructed intervals.

\section*{Supplementary Material}
A supplement to this article includes an application of the methods of Section \ref{sec:conf-regions} to a genomic example; 
further simulation studies under dependency of the observations; 
a discussion of related existing work on selective inference, where we contrast the conditional approach with ours; 
a full specification of the MQC interval of Section \ref{sec:mqc}; 
and an extension of the Selective-SDCI procedure for detecting only large correlations, with an application to the example of Section \ref{sec:example}.

\section*{Acknowledgements}
Section S.1 
of the Supplementary Material makes use of data generated by the Wellcome Trust Case Control Consortium. 
A full list of the investigators who contributed to the generation of the data is available from \url{www.wtccc.org.uk}. 
Funding for the project was provided by the Wellcome Trust under award 076113.

\bibliography{/Users/assafweinstein/Dropbox/Research/References}

\newcommand{\noop}[1]{}
\begin{thebibliography}{15}
\providecommand{\natexlab}[1]{#1}
\providecommand{\url}[1]{\texttt{#1}}
\expandafter\ifx\csname urlstyle\endcsname\relax
  \providecommand{\doi}[1]{doi: #1}\else
  \providecommand{\doi}{doi: \begingroup \urlstyle{rm}\Url}\fi

\bibitem[Barber and Cand{\`e}s(2016)]{barber2016knockoff}
Rina~Foygel Barber and Emmanuel~J Cand{\`e}s.
\newblock A knockoff filter for high-dimensional selective inference.
\newblock \emph{arXiv preprint arXiv:1602.03574}, 2016.

\bibitem[Benjamini et~al.(1993)Benjamini, Hochberg, and
  Kling]{benjamini1993false}
Y~Benjamini, Y~Hochberg, and Y~Kling.
\newblock False discovery rate control in pairwise comparisons.
\newblock \emph{Research Paper, Department of Statistics and Operations
  Research, Tel Aviv University}, 1993.

\bibitem[Benjamini et~al.(1998)Benjamini, Hochberg, and
  Stark]{benjamini1998confidence}
Y~Benjamini, Y~Hochberg, and PB~Stark.
\newblock Confidence intervals with more power to determine the sign: Two ends
  constrain the means.
\newblock \emph{Journal of the American Statistical Association}, 93\penalty0
  (441):\penalty0 309--317, 1998.

\bibitem[Benjamini and Yekutieli(2005)]{benjamini2005false}
Yoav Benjamini and Daniel Yekutieli.
\newblock False discovery rate--adjusted multiple confidence intervals for
  selected parameters.
\newblock \emph{Journal of the American Statistical Association}, 100\penalty0
  (469):\penalty0 71--81, 2005.

\bibitem[Bohrer(1979)]{bohrer1979multiple}
Robert Bohrer.
\newblock Multiple three-decision rules for parametric signs.
\newblock \emph{Journal of the American Statistical Association}, 74\penalty0
  (366a):\penalty0 432--437, 1979.

\bibitem[Bohrer and Schervish(1980)]{bohrer1980optimal}
Robert Bohrer and Mark~J Schervish.
\newblock An optimal multiple decision rule for signs of parameters.
\newblock \emph{Proceedings of the National Academy of Sciences}, 77\penalty0
  (1):\penalty0 52--56, 1980.

\bibitem[Finner(1994)]{finner1994two}
H~Finner.
\newblock Two-sided tests and one-sided confidence bounds.
\newblock \emph{The Annals of Statistics}, pages 1502--1516, 1994.

\bibitem[Fithian et~al.(2014)Fithian, Sun, and Taylor]{fithian2014optimal}
William Fithian, Dennis Sun, and Jonathan Taylor.
\newblock Optimal inference after model selection.
\newblock \emph{arXiv preprint arXiv:1410.2597}, 2014.

\bibitem[Neyman et~al.(1977)Neyman, Gupta, and Moore]{neyman1977synergistic}
Jerzy Neyman, SS~Gupta, and DS~Moore.
\newblock Synergistic effects and the corresponding optimal version of the
  multiple comparison problem.
\newblock \emph{Statistical Decision Theory and Related Topics}, 2:\penalty0
  297--311, 1977.

\bibitem[Poldrack and Mumford(2009)]{poldrack2009independence}
Russell~A Poldrack and Jeanette~A Mumford.
\newblock Independence in roi analysis: where is the voodoo?
\newblock \emph{Social Cognitive and Affective Neuroscience}, 4\penalty0
  (2):\penalty0 208--213, 2009.

\bibitem[Pratt(1961)]{pratt1961length}
John~W Pratt.
\newblock Length of confidence intervals.
\newblock \emph{Journal of the American Statistical Association}, 56\penalty0
  (295):\penalty0 549--567, 1961.

\bibitem[Rosenblatt and Benjamini(2014)]{rosenblatt2014neuroimage}
Jonathan~D Rosenblatt and Yoav Benjamini.
\newblock Selective correlations - not voodoo.
\newblock \emph{To appear, Neuroimage}, 2014.

\bibitem[Tian and Taylor(2015)]{tian2015selective}
Xiaoying Tian and Jonathan~E Taylor.
\newblock Selective inference with a randomized response.
\newblock \emph{arXiv preprint arXiv:1507.06739}, 2015.

\bibitem[Tom et~al.(2007)Tom, Fox, Trepel, and Poldrack]{tom2007neural}
Sabrina~M Tom, Craig~R Fox, Christopher Trepel, and Russell~A Poldrack.
\newblock The neural basis of loss aversion in decision-making under risk.
\newblock \emph{Science}, 315\penalty0 (5811):\penalty0 515--518, 2007.

\bibitem[Weinstein et~al.(2013)Weinstein, Fithian, and
  Benjamini]{weinstein2013selection}
Asaf Weinstein, William Fithian, and Yoav Benjamini.
\newblock Selection adjusted confidence intervals with more power to determine
  the sign.
\newblock \emph{Journal of the American Statistical Association}, 108\penalty0
  (501):\penalty0 165--176, 2013.

\end{thebibliography}
\bibliographystyle{plain}

\newpage
\appendix
\appendixpage

\section{Proofs}\label{app:proofs}

\subsection{A proof that $R_{\min}(\mathbf{Y}^{(i)}) = R_{CI}(\mathbf{Y})$ in Theorem \ref{thm:ssdci}}

Without loss of generality, we show that $|\calS^*(\mathbf{Y}^{(1)}, Y_1=y)|$ is constant over $y$ for all $y$ is such that $i\in \calS^*(\mathbf{Y}^{(1)},Y_1=y)$. 
Let 
\[
g(\alpha) = \inf\{ y\geq 0: \calC(y; \alpha) \text{ includes values of one sign only} \},
\]
and let $\tau(i) = g\left(\frac{i}{m}q\right),\ i=1,...,m$. 
Recall that $Y_{(i)}$ is the estimate with the $i$-th largest absolute value, hence $|{Y}_{(1)}|\geq |{Y}_{(2)}|\geq \cdots \geq |{Y}_m|$. 

Because $\calC(y; \alpha)$ satisfies the monotonicity requirements (MON 1) and (MON 2), 
$i^* = \max\left\{ i: \tau(i) \leq Y_{(i)} \right\}$ and $\tau(i)$ is a decreasing sequence. 
Define now a vector $\tilde{\mathbf{Y}} = (\tilde{\mathbf{Y}}^{(1)},\tilde{Y}_1)$, which depends on $\mathbf{Y}^{(1)}$ only, by $\tilde{\mathbf{Y}}^{(1)}=\mathbf{Y}^{(1)},\ \tilde{Y}_1 = \infty$. 
Let $\tilde{Y}_{(i)}$ the element among $\tilde{Y}_1,...,\tilde{Y}_m$ with the $i$-th largest absolute value. 
Furthermore, let
\[
\tilde{i}^* = \max\{1\leq i \leq m: \tau(i) \leq |\tilde{Y}_{(i)}|\}.
\]
We will show that if $1\in \calS^*(\mathbf{Y})$ then $i^*=\tilde{i}^*$, hence  if $1\in \calS^*(\mathbf{Y})$ then $|\calS^*(\mathbf{Y})| = i^*$, which does not depend on $y$.

First, note that $1 \in \calS^*(\mathbf{Y}) \iff \tau(\tilde{i}^*)\leq |y|$. 
Indeed, suppose that $y < \tau(\tilde{i}^*)$. 
For all $i \geq \tilde{i}^*$, $|Y_{(i)}|\leq \min(|\tilde{Y}_{(i)}|, \tau(\tilde{i}^*))$. 
Therefore, for all $i \geq \tilde{i}^*$, $|Y_{(i)}| < \tau(i)$, which together with the fact that $\tau(i)$ is decreasing implies that $i\notin \calS^*(\mathbf{Y})$ if $|Y_i| < \tau(\tilde{i}^*)$. In particular, $1 \notin \calS^*(\mathbf{Y})$.
On the other hand, if $\tau(\tilde{i}^*) \leq |y|$, then $|Y_{(\tilde{i}^*)}| = \min( |\tilde{Y}_{(\tilde{i}^*)}|, |y| ) \geq \tau(\tilde{i}^*)$, which together with the fact that $\tau(i)$ is decreasing implies that $i \in \calS^*(\mathbf{Y})$ if $\tau(\tilde{i}^*) \leq |Y_i|$. In particular, $1 \in \calS^*(\mathbf{Y})$.

To complete the proof, observe that when $\tau(\tilde{i}^*)\leq |y|$, 
(i) $|Y_{(i)}| < \tau(i)$ for $i > \tilde{i}^*$, which implies $i^* \leq \tilde{i}^*$, and 
(ii) $|Y_{(\tilde{i}^*)}| = \min( |\tilde{Y}_{(\tilde{i}^*)}|, |y| ) \geq \tau(\tilde{i}^*)$, which implies that $i^* \leq \tilde{i}^*$. 
We conclude that $i^*=\tilde{i}^*$, as required.

\subsection{Proof of Theorem \ref{thm:fcr}}

By the remark in Section \ref{sec:mqc}, it is enough to prove the theorem for the case $\var (Y_i)=1$. 
Indeed, for $\sigma^2 = \var(Y_i)$, letting $Y_i'=Y_i/\sigma$ and $\theta_i'=\theta_i/\sigma$ we have that 
$\theta_i \notin \calC(Y_i;\alpha) \iff \theta'_i \notin {\calC'}(Y_i'; \alpha)$
where $\calC(y;\alpha)$ and $\calC'(y';\alpha)$ are the MQC CIs corresponding to the distributions of $Y$ and $Y'$, respectively. 
Therefore the FCR of the procedure defined for the $Y_i$ (w.r.t. the $\theta_i$) is the same as the procedure defined for the $Y_i'$ (w.r.t. the $\theta_i'$). 

First we claim that for $\psi < 0.9$, the MQC interval is given by \eqref{eq:mqc-ci-1} for all $0 < \alpha < 0.25$. 
We need to check that $\psi_1 > 0.9$ for all $0 < \alpha < 0.25$. 
It can be verified that $\psi_1$ is a decreasing function of $\alpha$ on $0 < \alpha < 0.25$, and we have $\psi_1=0.978 > 0.9$, which together imply that $0.9 < \inf\{ \psi_1: 0 < \alpha < 0.25 \}$ as required.

Let $0 < \alpha < 0.25$ and $0 < \psi < 0.9$. 
We now consider a single parameter, $\theta$, and a corresponding estimator $Y\sim N(\theta,1)$, and show that  the probability that a sign-determining non-covering confidence interval is constructed for $\theta$, is no less than $\alpha/2$ for all $\theta$. 
Formally, let $NCI$ be the event that $CI:=\calC_{MQC}(Y;\alpha)$ 
(i) determines the sign, i.e., does not include values of opposite signs and 
(ii) does not include the true value $\theta$. 
Then we show that $\Pr_{\theta}(NCI) \geq \alpha/2$ for all $\theta$. 
Since for the MQC interval, sign determination occurs if and only if $|Y|\geq \cbar$, we have 
\begin{equation} \label{eq:pNCI1}
\Pr_{\theta}(NCI) = \Pr_{\theta}(|Y|\geq \cbar, \ \theta \notin CI).
\end{equation}
If the confidence interval were obtained simply by inverting the $1-\alpha$ acceptance regions $A(\theta)$ in \eqref{eq:mqc-ar} the event $\theta \notin CI$ could be replaced by $Y\notin A(\theta)$; 
however, the confidence interval is obtained by taking the convex hull of the inverse set, in which case it is possible that $Y \notin A(\theta)$ and yet $\theta \in CI$. 
We can overcome this difficulty by considering the ``effective" acceptance regions, $\bar{A}(\theta)$, which take into account the fact that the convex hull of $\{ \theta: Y\in A(\theta) \}$ is taken, in that $CI = \{ \theta: Y \in \bar{A}(\theta)\}$ (here without the convex hull). 
Denoting by $l(\theta)$ and $u(\theta)$ the lower and upper endpoints of $A(\theta)$, respectively, and denoting by $\bar{l}(\theta)$ and $\bar{u}(\theta)$ the lower and upper ends of $\bar{A}(\theta)$, respectively, it holds that $\bar{l}(\theta) = \max\{ u(\tilde{\theta}): \tilde{\theta}\leq \theta \}$ and $\bar{u}(\theta) = \min\{ l(\tilde{\theta}): \tilde{\theta}\geq \theta \}$. 
Explicitly,
\begin{equation} \label{eq:mqc-e-ar}
\bar{A}(\theta)=
  \begin{cases}
(-c_{\alpha/2}, c_{\alpha/2}), & \theta=0 \\
( -\bar{c}, \ctilde ), & 0< \theta \leq  \ctilde - \cbar \\
( -\bar{c}, g(\theta) ), & \ctilde - \cbar<\theta\leq \bar{c}+c_{\alpha/2} \\
( \theta - c_{\alpha/2}, \theta + c_{\alpha/2} ), & \bar{c} + c_{\alpha/2} < \theta
  \end{cases}
\end{equation}
with $\bar{A}(\theta) = -\bar{A}(-\theta)$ for $\theta<0$ and where $g(\theta) = \theta + F^{-1}\{ 2 - \alpha - F(\bar{c} + \theta) \}$.

\smallskip
\noindent Now we can write 
\begin{equation} \label{eq: pNCI2}
\Pr_{\theta}(NCI) = \Pr_{\theta}(|Y|\geq \cbar, \ Y\notin \bar{A}(\theta)),
\end{equation}
and we note that for $0 < \theta < \bar{c}+c_{\alpha/2}$, $(-c,c) \subset \bar{A}(\theta)$, hence $\Pr_{\theta}(NCI) = \Pr_{\theta}(Y\notin \bar{A}(\theta))$. 
For $\theta=0$, this is exactly $\alpha$.

\noindent For $0 < \theta < \ctilde - \cbar$, $\Pr_{\theta}(Y\notin \bar{A}(\theta)) = \Pr_{\theta}(Y\notin (-\cbar,\ctilde))$, which is minimized at $\theta = (\ctilde-\cbar)/2$. 
In order that $\Pr_{(\ctilde-\cbar)/2}(Y\notin \bar{A}(\theta))$ be less than $\alpha/2$, in which case $\Pr_{(\ctilde-\cbar)/2}(NCI) < \alpha/2$, it must hold that $\ctilde + \cbar > 2c_{\alpha/4}$. 
We claim that this cannot be the case. 
Hence, for any $\alpha$, let $\psi^*$ be the value of $\psi$ for which $\ctilde + \cbar = 2c_{\alpha/4}$. 
Then for a fixed $\alpha$, $\psi < \psi^*$ implies that $\ctilde + \cbar < 2c_{\alpha/4}$. 
Now, it can be verified that $\lim_{\alpha \to 0}\psi*>0.9$ (but $\lim_{\alpha \to 0}\psi*<0.94$) and that $\psi^*$ is an increasing function of $\alpha$ on $0 < \alpha < 0.25$, which imply that $\psi^* > 0.9$ for all $0 < \alpha < 0.25$. 
It follows that $\ctilde + \cbar < 2c_{\alpha/4}$ for all $0 < \alpha < 0.25$, and we conclude that $\Pr(NCI) \geq \alpha/2$ also for $0 < \theta < \ctilde - \cbar$. 

\noindent For $\ctilde - \cbar<\theta\leq \bar{c}+c_{\alpha/2}$, $A(\theta) = \bar{A}(\theta)$, and since $\Pr_{\theta}(Y\in A(\theta)) = 1-\alpha$, we have that $\Pr_{\theta}(NCI) = \alpha$. 

\noindent Finally, for $\theta > \bar{c} + c_{\alpha/2}$ we have $\Pr_{\theta}(NCI) = \Pr_{\theta}(|Y| > \cbar, \ |Y|>\theta + c_{\alpha/2}) \geq \alpha/2$. \\
In any case, $\Pr_{\theta}(NCI)$ does not drop below $\alpha/2$.

\smallskip
To evaluate the FCR, we follow a computation similar to that in BY. 
Let $0 < q < 0.25$ and $0 < \psi < 0.9$. 
Denote by $CI_i(\alpha) = \calC_{MQC}(Y_i; \alpha)$ a level $1-\alpha$ MQC interval using parameter $\psi$, and by $\cbar(\alpha) = \Phi^{-1}(1-\psi\cdot\alpha)$ the value of the quantity $\cbar$ associated with it.
Furthermore, let $C_k^{(i)}=\{ Y^{(i)}: R_{\min}(Y^{(i)}) = k \}$. 
For the selective-SDCI procedure of Definition \ref{def:fcr-ssdci} $R_{\min} = R_{CI}$, in which case BY show that
\begin{equation}
\fcr = \sum_{i=1}^m \sum_{k=1}^m \frac{1}{k}\Pr\left\{C_k^{(i)},\  i\in \calS(\bY),\ \theta_i\notin CI_i\left(\frac{k\cdot q}{m}\right)\right\}.
\end{equation}
Using the fact that $i \in \calS(\mathbf{Y})$ if and only if $|Y_i|\geq \cbar\left(\frac{R_{CI}\cdot q}{m}\right)$, we can replace the right hand side of the last equality by
\begin{align}
 &= \sum_{i=1}^m \sum_{k=1}^m \frac{1}{k}\Pr\left\{C_k^{(i)},\  |Y_i|\geq \cbar\left(\frac{k\cdot q}{m}\right),\ \theta_i\notin CI_i\left(\frac{k\cdot q}{m}\right)\right\} \\
 &= \sum_{i=1}^m \sum_{k=1}^m \frac{1}{k}\Pr\left\{ C_k^{(i)} \right\} \times \Pr \left\{ |Y_i|\geq \cbar\left(\frac{k\cdot q}{m}\right),\ \theta_i\notin CI_i\left(\frac{k\cdot q}{m}\right)\right \} \\
& \geq \sum_{i=1}^m \sum_{k=1}^m \frac{1}{k}\Pr\left\{ C_k^{(i)} \right\} \times \frac{kq}{2m} \label{eq: NCI} \\ 
&= \frac{q}{2}
\end{align}
where inequality~\eqref{eq: NCI} follows from the preceding part of the proof as $\frac{k\cdot q}{m} \leq q<0.25$.

\subsection{ Proof of Theorem \ref{thm:fcr-2dim}}

Beginning with an expression for FCR as appears in \citet{benjamini2005false},
\begin{eqnarray}
\lefteqn{ \fcr( \widetilde{CI}_{\bullet} ; {\cal S}; q_1, q_2) = \sum_{r  = 1}^m \sum_{i = 1}^m \frac{1}{r} \cdot \Pr( | {\cal S}_2 | = r, \tilde{NCI}_i)}   \nonumber \\
& = & \sum_{r  = 1}^m \sum_{i = 1}^m \frac{1}{r} \cdot 
\{ \ \Pr (  \ | {\cal S}_2 | = r,  \ \tilde{NCI}_{i 1})   +
\Pr (  \ | {\cal S}_2 | = r,  \  \theta_{i 1} \in CI_{i 1},  \ \tilde{NCI}_{i 2}  ) \  \} \nonumber \\
& = & \sum_{r  = 1}^m \sum_{i = 1}^m \frac{1}{r} \cdot 
\Pr (  \ | {\cal S}_2 | = r,   \  i \in {\cal S}_2 , \ \theta_{i 1} \notin CI_{i 1} )  \nonumber \\
&  &  +  \ \sum_{r  = 1}^m \sum_{i = 1}^m \frac{1}{r} \cdot 
\Pr (  \ | {\cal S}_2 | = r,  \  \theta_{i 1} \in CI_{i 1},  \ \tilde{NCI}_{i 2}  )   \nonumber \\
& = & \sum_{r  = 1}^m \sum_{i = 1}^m \frac{1}{r} \cdot 
\Pr (   \theta_{i 1} \notin CI_{i 1} ) \cdot \Pr (  \ | {\cal S}_2 | = r,   \  i \in {\cal S}_2  )  \nonumber \\
&  &  +  \ \Pr (   \theta_{i 1} \in CI_{i 1} ) \cdot
\sum_{r  = 1}^m \sum_{i = 1}^m \frac{1}{r} \cdot 
\Pr (  \ | {\cal S}_2 | = r,   \ \tilde{NCI}_{i 2}  )   \nonumber  \\
& = & q_1 \cdot \sum_{r  = 1}^m   \frac{1}{r} \cdot \sum_{i = 1}^m 
\Pr (  \ | {\cal S}_2 | = r,   \  i \in {\cal S}_2  )  +  \ (1 - q_1) \cdot  \sum_{r  = 1}^m \sum_{i = 1}^m \frac{1}{r} \cdot 
\Pr (  \ | {\cal S}_2 | = r,   \ \tilde{NCI}_{i 2}  )   \nonumber 
\end{eqnarray}
To complete the proof, note that for any  ${\cal S}_2$,
$\sum_{i = 1}^m  \Pr (  \ | {\cal S}_2 | = r,   \  i \in {\cal S}_2  )   = r \cdot  \Pr (  \ | {\cal S}_2 | = r)$,
and that 
$\fcr( CI_{\bullet 2} ;  \ {\cal S}_2 ; \ q_2) 
= \sum_{r  = 1}^m \sum_{i = 1}^m \Pr( | {\cal S}_2 | = r, \tilde{NCI}_{\bullet 2}) / r$.

\end{document}